\documentclass{article}

    \PassOptionsToPackage{numbers, compress}{natbib}



\usepackage[final]{neurips_2023}


\usepackage[utf8]{inputenc} 
\usepackage[T1]{fontenc}    
\usepackage{hyperref}       
\usepackage{url}            
\usepackage{booktabs}       
\usepackage{amsfonts}       
\usepackage{nicefrac}       
\usepackage{microtype}      
\usepackage{xcolor}         

\usepackage{xcolor}
\usepackage[T1]{fontenc}
\usepackage{mathptmx}
\usepackage{amsmath}

\usepackage{amsfonts}
\usepackage{amsthm}
\usepackage{mathtools}
\usepackage{algorithm}
\usepackage{algcompatible}
\usepackage{thmtools,thm-restate}
\usepackage[affil-it]{authblk}
\usepackage{comment}
\usepackage{ifpdf}
\usepackage{bm}
\usepackage{color}
\usepackage{scalerel}
\usepackage{stackengine,wasysym}
\usepackage{hyperref}
\usepackage[stable]{footmisc}
\usepackage{enumerate}

\usepackage{subcaption}
\usepackage{enumitem}
\usepackage{booktabs}

\newtheorem{thm}{Theorem}[section]

\newtheorem{lem}{Lemma}[section]

\newtheorem{prop}{Proposition}[section]

\newtheorem{dfn}{Definition}[section]

\newtheorem{rk}{Remark}

\newcommand{\vect}[1]{\boldsymbol{#1}}
\newcommand{\tp}[1]{{#1}^{\mathsf T}}
\renewcommand{\bar}{\overline}
\newcommand{\eps}{\epsilon}
\newcommand{\pa}{\partial}

\renewcommand{\eps}{\varepsilon}
\renewcommand{\epsilon}{\varepsilon}
\renewcommand{\Sigma}{\varSigma}

\newcommand{\E}{\mathrm E}

\newcommand{\mCv}{\mathrm{Cov}}

\newcommand{\tr}{\mathrm{tr}}

\DeclareMathAlphabet\mathbfcal{OMS}{cmsy}{b}{n}

\newcommand{\half}{\frac12}
\newcommand{\unif}{\mathrm{Unif}}

\newcommand{\bv}{{\bf v}}

\newcommand{\bmu}{\vect\mu}

\newcommand{\bzero}{{\bf 0}}
\newcommand{\bA}{{\bf A}}
\newcommand{\by}{{\bf y}}
\newcommand{\bu}{{\bf u}}

\newcommand{\bx}{{\bf x}}

\newcommand{\bz}{{\bf z}}
\newcommand{\bZ}{{\bf Z}}

\newcommand{\bC}{{\bf C}}
\newcommand{\bI}{{\bf I}}
\newcommand{\bL}{{\bf L}}

\newcommand{\ep}{\mathrm{EP}}
\newcommand{\EC}{\mathrm{EC}}

\newcommand{\qED}{\mathrm{q}\!-\!\mathrm{ED}}
\newcommand{\sqED}{\mathrm{q}\!-\!\mathrm{ED}^*}
\newcommand{\qEP}{\mathrm{q}\!-\!\mathcal{EP}}

\newcommand{\mC}{{\mathcal C}}
\newcommand{\mF}{{\mathcal F}}
\newcommand{\mG}{{\mathcal G}}

\newcommand{\mI}{{\mathcal I}}

\newcommand{\mN}{{\mathcal N}}

\newcommand{\mO}{{\mathcal O}}
\newcommand{\mS}{{\mathcal S}}

\newcommand{\mbN}{{\mathbb N}}
\newcommand{\mbR}{{\mathbb R}}

\ifpdf
   \graphicspath{{./figure/PNG/}{./figure/PDF/}{./figure/}}
\else
   \graphicspath{{./figure/EPS/}{./figure/}}
\fi

\title{Bayesian Learning via Q-Exponential Process}

%

\author{%
  Shuyi Li \quad  Michael O'Connor \quad Shiwei Lan \thanks{slan@asu.edu} \\
  School of Mathematical \& Statistical Sciences\\
  Arizona State University,
  Tempe, AZ 85287\\
}

\begin{document}

\maketitle

\begin{abstract}
Regularization is one of the most fundamental topics in optimization, statistics and machine learning. To get sparsity in estimating a parameter $u\in\mbR^d$, an $\ell_q$ penalty term, $\Vert u\Vert_q$, is usually added to the objective function. What is the probabilistic distribution corresponding to such $\ell_q$ penalty? What is the \emph{correct} stochastic process corresponding to $\Vert u\Vert_q$ when we model functions $u\in L^q$? This is important for statistically modeling high-dimensional objects such as images, with penalty to preserve certain properties, e.g. edges in the image.
In this work, we generalize the $q$-exponential distribution (with density proportional to) $\exp{(- \half|u|^q)}$ to a stochastic process named \emph{$Q$-exponential (Q-EP) process} that corresponds to the $L_q$ regularization of functions. The key step is to specify consistent multivariate $q$-exponential distributions by choosing from a large family of elliptic contour distributions. The work is closely related to Besov process which is usually defined in terms of series. Q-EP can be regarded as a definition of Besov process with explicit probabilistic formulation, direct control on the correlation strength, and tractable prediction formula. From the Bayesian perspective, Q-EP provides a flexible prior on functions with sharper penalty ($q<2$) than the commonly used Gaussian process (GP, $q=2$).
We compare GP, Besov and Q-EP in modeling functional data, reconstructing images and solving inverse problems and demonstrate the advantage of our proposed methodology.
\end{abstract}


\section{INTRODUCTION}

Regularization on function spaces is one of the fundamental questions in statistics and machine learning. High-dimensional objects such as images can be viewed as discretized functions defined on 2d or 3d domains. Statistical models for these objects on function spaces demand regularization to induce sparsity, prevent over-fitting, produce meaningful reconstruction, etc.
Gaussian process \citep[GP][]{Rasmussen_2005,Bernardo_1998} has been widely used as an $L_2$ penalty (negative log-density as a quadratic form) or a prior on the function space. Despite the flexibility, sometimes random candidate functions drawn from GP are over-smooth for modeling certain objects such as images with sharp edges. 
To address this issue, researchers have proposed a class of $L_1$ penalty based priors including Laplace random field \citep{Podg_rski_2011,Lucka_2012,KOZUBOWSKI_2013} and Besov process \citep{Lassas_2009,Dashti_2012,jia2016bayesian,dashti2017}. They have been extensively applied in spatial modeling \citep{Podg_rski_2011}, signal processing \citep{KOZUBOWSKI_2013}, imaging analysis \citep{Vanska_2009,Lucka_2012} and inverse problems \cite{Lassas_2009,Dashti_2012}.
Figure \ref{fig:linv_map} demonstrates an application of nonparametric regression models on functions endowed with GP, Besov and our proposed $q$-exponential process (Q-EP) priors respectively to reconstruct a blurry image of a satellite. Q-EP model generates the best reconstruction, indicating its advantage over GP in modeling objects with abrupt changes or sharp contrast such as ``edges".

For these high-dimensional (refer to its discretization) inhomogeneous objects on $d^\star$ domains $D\subset \mbR^{d^\star}$, particularly 2d images with sharp edges ($d^\star=2$), one can model them as a random function $u$ from a Besov process represented by the following series for a given orthonormal basis $\{\phi_\ell\}_{\ell=1}^\infty$ in $L^2(D)$ \citep{Lassas_2009, Dashti_2012}:
\begin{equation}\label{eq:bsv_expn}
u: D \longrightarrow \mbR, \quad
    u(x) = \sum_{\ell =1}^\infty \gamma_\ell u_\ell \phi_\ell(x), \quad u_\ell\overset{iid}{\sim} \pi_q(\cdot) \propto \exp{(- \half|\cdot|^q)}
\end{equation}
where $q\geq 1$ and $\gamma_\ell = \kappa^{-\frac{1}{q}} \ell^{-(\frac{s}{d^\star}+\half-\frac{1}{q})}$ with (inverse) variance $\kappa>0$ and smoothness $s>0$.
When $q=2$ and $\{\phi_\ell\}$ is chosen to be Fourier basis, this reduces to GP \citep{dashti2017} but Besov is often used with $q=1$ and wavelet basis \citep{Lassas_2009} to provide ``edge-preserving" function candidates suitable for image analysis. 
Historically, \cite{lassas2004can} discovered that the total variation prior degenerates to GP prior as the discretization mesh becomes denser and thus loses the edge-preserving properties in high dimensional applications. Therefore, \cite{Lassas_2009} proposed the Besov prior defined as in \eqref{eq:bsv_expn} and proved its discretization-invariant property.
Though straightforward, such series definition lacks a direct way to specify the correlation structure as GP does through the covariance function. What is more, once the basis $\{\phi_\ell\}$ is chosen, there is no natural way to make prediction with Besov process.


\begin{figure*}[t]
\centering
\includegraphics[width=1\textwidth,height=.2\textwidth]{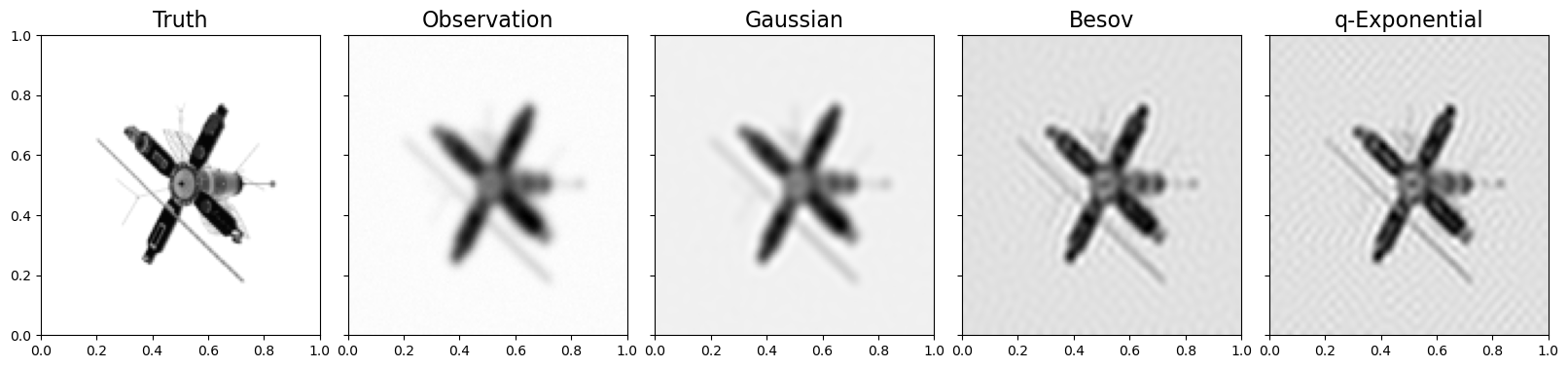}
\caption{Image of satellite: true image, blurred observation, and reconstructions by GP, Besov and Q-EP models with relative errors $75.19\%$, $21.94\%$ and $20.35\%$ respectively.}
\label{fig:linv_map}
\end{figure*}

We propose a novel stochastic process named \emph{$q$-exponential process (Q-EP)}
to address these issues. We start with the $q$-exponential distribution $\pi_q(\cdot)$ and generalize it to a multivariate distribution (from a family of elliptic contour distributions) that is consistent to marginalization. Such consistency requires the joint distribution and the marginalized one (by any subset of components) to have the same format of density (See Section \ref{sec:qep}).
We then generalize such multivariate $q$-exponential distribution to the process Q-EP and establish its connection and contrast to the Besov process.
Note, if we view the negative log-density of the proposed distribution and process, Q-EP would impose an $L_q$ regularization on the function space, similarly as $L_2$ regularization given by GP whose negative log-density is a quadratic form of the input variable $x$ (See Remark \ref{rk:Lq-reg}).

\paragraph{Connection to existing works} 
The proposed Q-EP process is related to the student-$t$ process (TP) \citep{Shah_2014} as alternatives to GP. TP generalizes multivariate $t$-distribution (MVT) and is derived as a scale mixture of GP.
Both TP and Q-EP can be viewed as a special case of the elliptical process \citep{Bankestad_2020} which gives the condition on general elliptic distributions that can be generalized to a valid stochastic process.
Both papers focus on extending GP to robust models for heavier tail data, while our proposed work innovates a new Bayesian learning method on function spaces through the regularization parameter $q$ (See Figure \ref{fig:linv_varyingq} for its effect on regularization when it varies), as is usually done in the optimization.
Both our proposed Q-EP and \cite{Bankestad_2020} are inspired by Kano's consistency result \cite{KANO_1994}, however the later focuses on a completely different process named squeezebox process.
Our work on Q-EP makes multi-fold contributions to the learning of functional data in statistics and machine learning:
\begin{enumerate}
    \item We propose a novel stochastic process Q-EP corresponding to the $L_q$ regularization on function spaces.
    \item For the first time we define/derive Besov process probabilistically as Q-EP with direct ways to configure correlation and to make prediction.
    \item We provide flexible Bayesian inference methods based on the Markov Chain Monte Carlo (MCMC) algorithms using a white-noise representation for Q-EP prior models.
\end{enumerate}
The rest of the paper is organized as follows. Section \ref{sec:qed} introduces the $q$-exponential distribution and its multivariate generalizations. We propose the Q-EP with details in Section \ref{sec:qep} and introduce it as a nonparametric prior for modeling functional data. In Section \ref{sec:numerics} we demonstrate the advantage of Q-EP over GP and Besov in time series modeling, image reconstruction, and Bayesian inverse problems (Appendix \ref{sec:adif}). 
Finally we discuss some future directions in Section \ref{sec:conclusion}.

\section{THE $Q$-EXPONENTIAL DISTRIBUTION AND ITS MULTIVARIATE GENERALIZATIONS}\label{sec:qed}
Let us start with the $q$-exponential distribution for a scalar random variable $u\in \mbR$. It is named in \cite{Dashti_2012} and defined with the following density not in an exact form (as a probability density normalized to 1):
\begin{equation}\label{eq:q_exp}
    \pi_q(u) \propto \exp{(- \half|u|^q)}.
\end{equation}
This $q$-exponential distribution \eqref{eq:q_exp} is actually a special case of the following \emph{exponential power (EP)} distribution $\ep(\mu, \sigma, q)$ with $\mu=0$, $\sigma=1$:
\begin{equation}\label{eq:epd}
    p(u|\mu, \sigma, q) = \frac{q}{2^{1+1/q}\sigma\Gamma(1/q)}\exp\left\{-\half \left|\frac{u-\mu}{\sigma}\right|^q\right\}
\end{equation}
where $\Gamma$ denotes the gamma function.
Note the parameter $q>0$ in \eqref{eq:epd} controls the tail behavior of the distribution: the smaller $q$ the heavier tail and vice verse.
This distribution also includes many commonly used ones such as the normal distribution $\mN(\mu,\sigma^2)$ for $q=2$ and the Laplace distribution $L(\mu, b)$ with $\sigma=2^{-1/q} b$ when $q=1$.

How can we generalize it to a multivariate distribution and further to a stochastic process?
Gomez \citep{Gomez_1998} provided one possibility of a multivariate EP distribution, denoted as $\ep_d(\bmu, \bC, q)$, with the following density:
\begin{equation}\label{eq:mepd}
\begin{aligned}
    p(\bu|\bmu, \bC, q) =& \frac{q \Gamma(\frac{d}{2})}{2 \Gamma(\frac{d}{q})} 2^{-\frac{d}{q}} \pi^{-\frac{d}{2}} |\bC|^{-\half} \exp\left\{-\half\left[\tp{(\bu-\bmu)} \bC^{-1} (\bu-\bmu) \right]^{\frac{q}{2}}\right\}
\end{aligned}
\end{equation}
When $q=2$, it reduces to the familiar multivariate normal (MVN) distribution $\mN_d(\bmu, \bC)$.

Unfortunately, unlike MVN being the foundation of GP, the Gomez's EP distribution $\ep_d(\bmu, \bC, q)$ fails to generalize to a valid stochastic process because it does not satisfy the marginalization consistency as MVN does (See Section \ref{sec:qep} for more details).
It turns out we need to seek candidates in an even larger family of \emph{elliptic} (contour) distributions $\EC_d(\bmu, \bC, g)$:
\begin{dfn}[Elliptic distribution]
    A multivariate elliptic distribution $\EC_d(\bmu, \bC, g)$ has the following density \citep{Johnson_1987}
    \begin{equation} \label{eq:EC}
        p(\bu) = k_d |\bC|^{-\half} g(r), \quad r(\bu) = \tp{(\bu-\bmu)} \bC^{-1} (\bu-\bmu)
    \end{equation}
    where $k_d>0$ is the normalizing constant and $g(\cdot)$, a one-dimensional real-valued function independent of $d$ and $k_d$, is named \emph{density generating function} \citep{fang1990generalized}.
\end{dfn}

Every elliptic (contour) distributed random vector $\bu\sim \EC_d(\bmu, \bC, g)$ has a stochastic representation mainly due to Schoenberg \citep{Schoenberg_1938,CAMBANIS_1981,Johnson_1987}, as stated in the following theorem.
\begin{thm}\label{thm:sthoc_rep}
$\bu\sim \EC_d(\bmu, \bC, g)$ if and only if
\begin{equation}\label{eq:stoch_EC}
    \bu \overset{d}{=} \bmu + R \bL S 
\end{equation}
where $S\sim \unif(\mS^{d+1})$ uniformly distributed on the unit-sphere $\mS^{d+1}$, $\bL$ is the Cholesky factor of $\bC$ such that $\bC=\bL\tp{\bL}$, $R\perp S$ and $R^2 \overset{d}{=} r(\bu) \sim f(r)=\frac{\pi^{\frac{d}{2}}}{\Gamma(\frac{d}{2})}k_d r^{\frac{d}{2}-1} g(r)$.
\end{thm}

The Gomez's EP distribution $\ep_d(\bmu, \bC, q)$ is a special elliptic distribution $\EC_d(\bmu, \bC, g)$ with $g(r)=\exp\{-\half r^\frac{q}{2}\}$ and $R^q\sim \Gamma(\alpha=\frac{d}{q},\beta=\half)$ \citep{Gomez_1998}. Not all elliptical distributions can be used to create a valid process \citep{Bankestad_2020}.
In the following, we will carefully choose the density generator $g$ in $\EC_d(\bmu, \bC, g)$ to define a consistent multivariate $q$-exponential distribution generalizable to a process appropriately.

\begin{figure}[t]
\centering
\includegraphics[width=.495\textwidth,height=.4\textwidth]{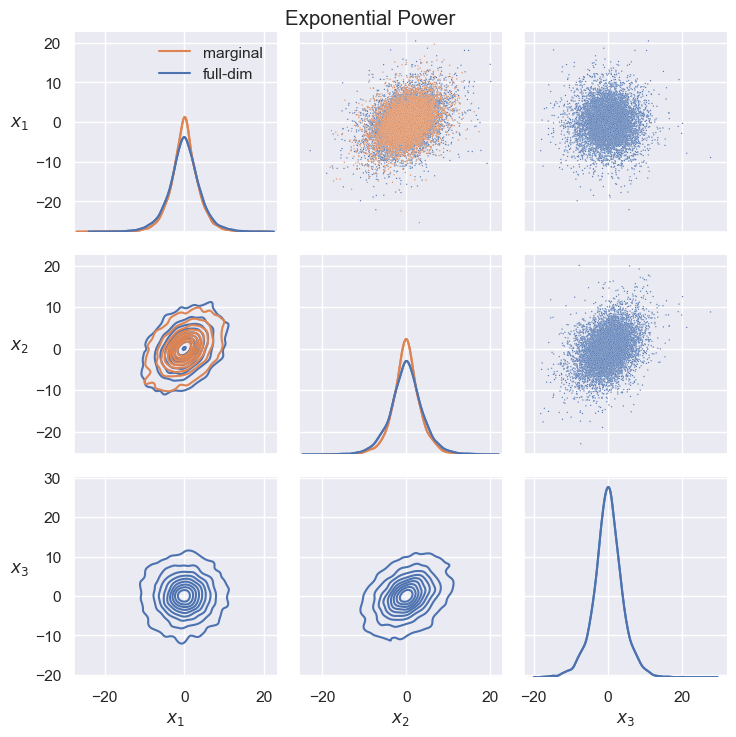}
\includegraphics[width=.495\textwidth,height=.4\textwidth]{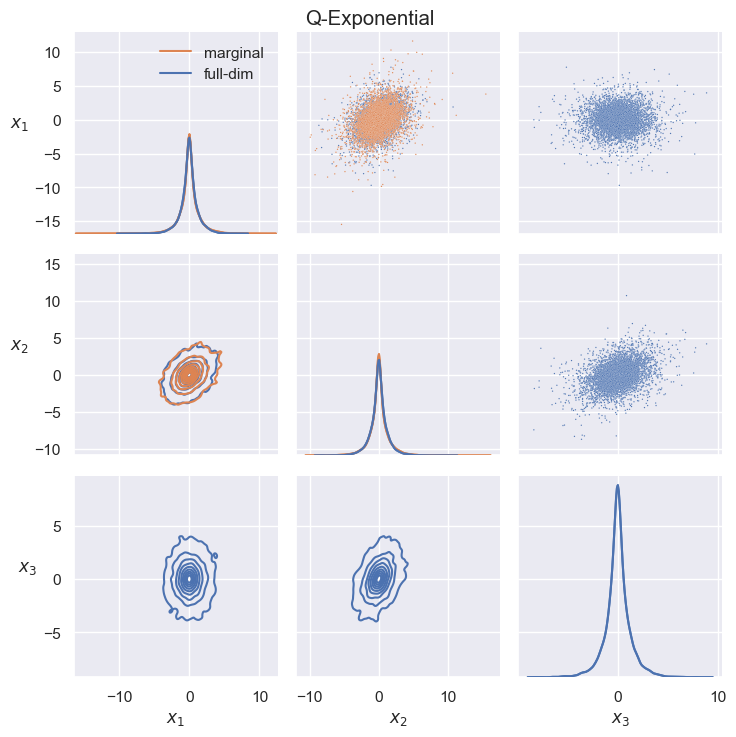}
\caption{Inconsistent (Gomez's) EP distribution $\ep_d(\bmu, \bC, q)$ (left) vs. consistent Q-exponential distribution $\qED_d(\bmu, \bC)$ (right). Both can be sampled using \eqref{eq:stoch_EC} with $R^q\sim \Gamma(\alpha=\frac{d}{q},\beta=\half)$ and $R^q\sim \Gamma\left(\alpha=\frac{d}{2}, \beta=\half \right)$ respectively. Note there is significant discrepancy between the marginalization of $\ep_3(\bmu, \bC, q)$ and $\ep_2(\bmu, \bC, q)$. However, the marginalization of $\qED_3(\bmu, \bC)$ coincides with $\qED_2(\bmu, \bC)$. Empirical densities are estimated based on 10000 samples (shown as dots).}
\label{fig:consistency}
\end{figure}

\section{THE $Q$-EXPONENTIAL PROCESS}\label{sec:qep}
To generalize $\EC_d(\bmu, \bC, g)$ to a valid stochastic process, we need to choose proper $g$ such that the resulting distribution satisfies two conditions of Kolmogorov extension theorem \citep{Oksendal_2003}:
\begin{thm}[Kolmogorov's Extension]
For all $t_1,\cdots,t_k\in T$, $k\in\mbN$ let $\nu_{t_1,\cdots,t_k}$ be probability measures on $\mbR^{nk}$ satisfying
\begin{equation}
    \begin{aligned}
       && (K1): & \nu_{t_{\sigma(1)},\cdots,t_{\sigma(k)}}(F_1\times\cdots \times F_k) = \nu_{t_1,\cdots,t_k}(F_{\sigma^{-1}(1)}\times\cdots\times F_{\sigma^{-1}(k)})\, for\, all\, permutations\, \sigma\in S(k) \\
       && (K2): & \nu_{t_1,\cdots,t_k}(F_1\times\cdots \times F_k) = \nu_{t_1,\cdots,t_k,t_{k+1},\cdots,t_{k+m}}(F_1\times\cdots \times F_k\times \mbR^k\times\cdots\times \mbR^n)\, for\, all\, m\in\mbN
    \end{aligned}
\end{equation}
Then there exists a probability space $(\Omega,\mF,P)$ and a stochastic process $\{X_t\}$ on $\Omega$, $X_t:\Omega \to \mbR^n$ such that
\begin{equation}
\nu_{t_1,\cdots,t_k}(F_1\times\cdots \times F_k) = P[X_{t_1}\in F_1, \cdots, X_{t_k}\in F_k]
\end{equation}
for all $t_i\in T$, $k\in \mbN$ and all Borel sets $F_i\in\mF$.
(K1) and (K2) are referred to as {\bf exchangeability} and {\bf consistency} conditions respectively.
\end{thm}

As pointed out by Kano \citep{KANO_1994}, the elliptic distribution $\EC_d(\bmu, \bC, g)$ in the format of Gomez's EP distribution \eqref{eq:mepd} with $g(r)=\exp\{-\half r^\frac{q}{2}\}$ does not satisfy the consistency condition \citep[also c.f. Proposition 5.1 of][]{Gomez_1998}.
Figure \ref{fig:consistency} (left panel) also illustrates such inconsistency numerically.
However, Kano's consistency theorem \cite{KANO_1994} suggests a different viable choice of $g$ to make a valid generalization of $\EC_d(\bmu, \bC, g)$ to a stochastic process \citep{Bankestad_2020}:
\begin{thm}[Kano's Consistency]\label{thm:consistency}
    An elliptic distribution is \emph{consistent} if and only if its density generator function, $g(\cdot)$, has the following form
    \begin{equation}
        g(r) = \int_0^\infty \left(\frac{s}{2\pi}\right)^\frac{d}{2} \exp\left\{-\frac{rs}{2}\right\} p(s) ds
    \end{equation}
    where $p(s)$ is a strictly positive mixing distribution independent of $d$ and $p(s=0)=0$.
\end{thm}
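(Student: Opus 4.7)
The theorem asserts that consistent families of elliptic distributions across dimensions are exactly the Gaussian scale mixtures, with a single dimension-free mixing density $p(s)$ determining $g$. My plan is to prove sufficiency by a direct Fubini calculation, and to prove necessity by appealing to Schoenberg's 1938 characterization of orthogonally-invariant probability measures on $\mbR^\infty$.

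\textbf{Sufficiency.} Assuming $g$ has the stated integral form, I substitute it into $p(\bu) = k_d|\bC|^{-1/2}g(r)$ and swap integrations by Fubini. Using the algebraic identity
\begin{equation*}
(2\pi)^{-d/2}\, s^{d/2}\, |\bC|^{-1/2}\, \exp\!\left\{-\tfrac{s}{2}\tp{(\bu-\bmu)}\bC^{-1}(\bu-\bmu)\right\} = \mN_d(\bu;\bmu,\, s^{-1}\bC),
\end{equation*}
I rewrite $p(\bu) = k_d \int_0^\infty \mN_d(\bu;\bmu,\, s^{-1}\bC)\, p(s)\, ds$, exhibiting the law as a Gaussian scale mixture with normalization forcing $k_d=1$. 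Marginalizing any subset of coordinates commutes with the outer integral over $s$, and each inner Gaussian marginalizes to a lower-dimensional Gaussian with the restricted mean and the corresponding sub-block of $\bC$. Because $p(s)$ does not depend on $d$, the marginal law in dimension $d'<d$ is again of the prescribed form with $d$ replaced by $d'$, which is precisely the consistency condition.

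\textbf{Necessity.} Conversely, assume consistency for every $d$. After reducing to the standardized case $\bmu=\bzero,\, \bC=\bI$, each $\EC_d(\bzero,\bI,g)$ is a spherical law with stochastic representation $\bu = R\, S$ where $S$ is uniform on the unit sphere of $\mbR^d$ and $R\ge 0$. Exchangeability of the coordinates together with marginalization consistency lets Kolmogorov's extension theorem produce a probability measure on $\mbR^\infty$ whose every finite-dimensional projection is orthogonally invariant. Schoenberg's theorem characterizes any such measure as a mixture of isotropic Gaussians: there exists a probability measure $p(s)\,ds$ on $(0,\infty)$, with $p(\{0\})=0$ (excluding the degenerate case of a Gaussian of infinite variance, which would break tightness of the joint law), such that every $d$-dimensional projection has density $\int_0^\infty \mN_d(\bzero, s^{-1}\bI)(\bu)\, p(s)\, ds$. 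Reading the radial factor off this identity yields $g(r)=\int_0^\infty (s/2\pi)^{d/2}e^{-rs/2}p(s)\,ds$, as claimed; the same $p(s)$ works for every $d$ precisely because Schoenberg's construction is dimension-free.

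\textbf{Main obstacle.} The substantive step is invoking Schoenberg's theorem in the necessity direction. The subtlety I expect to handle carefully is verifying that the dimension-dependent radial law $f_{R^2}^{(d)}$ arises from a single dimension-free mixing measure via the Laplace-type identity $f_{R^2}^{(d)}(r) \propto r^{d/2-1}\int_0^\infty s^{d/2}e^{-rs/2}\, p(s)\,ds$, equivalently that $R^2 \stackrel{d}{=} W/S$ with $W\sim \chi^2_d$ independent of $S\sim p$. This decomposition is exactly what Schoenberg's theorem delivers, so once it is invoked the identification of $g$ is immediate; without this input, one is left with a collection of radial laws on each sphere with no common generator, and the necessity claim has no direct route.
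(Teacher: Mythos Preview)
The paper does not supply its own proof of this theorem: it is stated as a result of Kano \cite{KANO_1994} (building on Schoenberg \cite{Schoenberg_1938}) and is simply invoked to motivate the choice of density generator \eqref{eq:density_generator}. So there is no in-paper argument to compare against.

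That said, your outline is the standard route and is essentially how the cited literature establishes the result. Sufficiency via Fubini and Gaussian marginalization is straightforward and correct. For necessity, reducing to the standardized spherical case and invoking Schoenberg's characterization of infinite-dimensional rotationally invariant laws as scale mixtures of Gaussians is exactly the right lever; Kano's 1994 argument is a density-side refinement of the same idea. One small point worth tightening: the role of the condition $p(s=0)=0$ is not really about ``tightness of the joint law'' but about ensuring a density exists in every dimension (a nonzero mass at $s=0$ contributes a degenerate Dirac component with no Lebesgue density), so your parenthetical justification should be rephrased. Otherwise the plan is sound.
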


\subsection{Consistent Multivariate $Q$-exponential Distribution}
In the above theorem \ref{thm:consistency}, if we choose $p(s)=\delta_{r^{\frac{q}{2}-1}}(s)$, then we have 
$g(r)=r^{\left(\frac{q}{2}-1\right)\frac{d}{2}} \exp\left\{-\frac{r^\frac{q}{2}}{2}\right\}$,
which leads to the following consistent \emph{multivariate $q$-exponential distribution} $\qED_d(\bmu, \bC)$. 
\begin{dfn}
A multivariate $q$-exponential distribution, denoted as $\qED_d(\bmu, \bC)$, has the following density
\begin{equation}\label{eq:qEP}
p(\bu|\bmu, \bC, q) = \frac{q}{2} (2\pi)^{-\frac{d}{2}} |\bC|^{-\half} \boxed{r^{(\frac{q}{2}-1)\frac{d}{2}}} \exp\left\{-\frac{r^\frac{q}{2}}{2}\right\}, \quad
r(\bu) = \tp{(\bu-\bmu)} \bC^{-1} (\bu-\bmu)
\end{equation}
\end{dfn}
\begin{rk}
    When $q=2$, $\qED_d(\bmu, \bC)$ reduces to MVN $\mN_d(\bmu, \bC)$.
    When $d=1$, if we let $C=1$, then we have the density for $u$ as $p(u)\propto |u|^{\frac{q}{2}-1} \exp\left\{-\half |u|^q \right\}$, differing from the original un-normalized density $\pi_q$ in \eqref{eq:q_exp} by a term $|u|^{\frac{q}{2}-1}$. This is needed for the consistency of process generalization. Numerically, it has the similar ``edge-preserving" property as the Besov prior.
\end{rk}
\begin{rk}\label{rk:Lq-reg}
    If taken negative logarithm, the density of $\qED_d$ in \eqref{eq:qEP} yields a quantity dominated by some weighted $L_q$ norm of $\bu-\bmu$, i.e. $\half r^\frac{q}{2}=\half \Vert \bu-\bmu\Vert_{\bC}^q$. From the optimization perspective, $\qED_d$, when used as a prior, imposes $L_q$ regularization in obtaining the maximum a posterior (MAP).
\end{rk}
Regardless of the normalizing constant, our proposed multivariate $q$-exponential distribution $\qED_d(\bmu, \bC)$ differs from the Gomez's EP distribution $\ep_d(\bmu, \bC, q)$ by a boxed term $r^{(\frac{q}{2}-1)\frac{d}{2}}$.
As stated in the following theorem, $\qED_d$ satisfies the two conditions of Kolmogorov extension theorem thus is ready to generalize to a stochastic process (See the right panel of Figure \ref{fig:consistency} for the consistency).
\begin{thm}\label{thm:K_conds}
The multivariate $q$-exponential distribution is both {\bf exchangeable} and {\bf consistent}.
\end{thm}
\begin{proof}
See Appendix \ref{thm:3.3}.
\end{proof}

Like student-$t$ distribution \citep{Shah_2014} and other elliptic distributions \citep{Bankestad_2020}, we can show (See Appendix \ref{thm:3.4}) that $\qED_d$ is represented as a scale mixture of Gaussian distributions for $0<q<2$ \citep{KANO_1994, Andrews_1973, West_1987}.

Numerically, thanks to our choice of density generator $g(r)=r^{\left(\frac{q}{2}-1\right)\frac{d}{2}} \exp\left\{-\frac{r^\frac{q}{2}}{2}\right\}$, one can show that $R^q\sim \chi^2_d$ (as in Appendix \ref{prop:3.1}) thus $R$ in Theorem \ref{thm:sthoc_rep} can be sampled as $q$-root of a $\chi^2_d$ random variable, which completes the recipe for generating random vector $\bu \sim \qED_d(0, \bC)$ based on the stochastic representation \eqref{eq:stoch_EC}.
This is important for the Bayesian inference as detailed Section \ref{sec:inference}.
Note the matrix $\bC$ in the definition \eqref{eq:qEP} characterizes the covariance between the components, as shown in the following proposition.
\begin{prop}\label{prop:mean_cov}
    If $\bu\sim \qED_d(\bmu, \bC)$, then we have
    \begin{equation}\label{eq:qep_cov}
        \E[\bu] = \bmu, \; \mCv(\bu) = \frac{2^{\frac{2}{q}}\Gamma(\frac{d}{2}+\frac{2}{q})}{d\Gamma(\frac{d}{2})} \bC \overset{\cdot}{\sim} d^{\frac{2}{q}-1} \bC, \; as\; d\to \infty
    \end{equation}
\end{prop}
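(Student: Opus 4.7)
The plan is to leverage the stochastic representation \eqref{eq:stoch_EC}, namely $\bu = \bmu + R\bL S$ with $S \sim \unif(\mS^{d-1})$, $R$ and $S$ independent, and $\bL\tp{\bL} = \bC$. Everything then reduces to computing the first two moments of $S$ on the unit sphere and invoking Proposition \ref{prop:Rq_dist} / Corollary \ref{cor:moments} for the moments of $R$.

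First I would handle the mean. By the symmetry of the uniform distribution on $\mS^{d-1}$, $\E[S] = \mathbf{0}$; combined with independence of $R$ and $S$, this gives $\E[\bu] = \bmu + \E[R]\bL\E[S] = \bmu$. Next, for the covariance, I would note that $\mCv(\bu) = \bL\,\mCv(R S)\,\tp{\bL} = \E[R^2]\,\bL\,\mCv(S)\,\tp{\bL}$. The key identity is $\mCv(S) = \tfrac{1}{d}\bI_d$, which follows because $\|S\|^2 = 1$ implies $\tr(\mCv(S)) = 1$, while rotational invariance forces $\mCv(S)$ to be a scalar multiple of the identity. Substituting yields
\begin{equation*}
    \mCv(\bu) = \frac{\E[R^2]}{d}\,\bC.
\end{equation*}

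To finish, I would apply Corollary \ref{cor:moments} with $k=2$, giving $\E[R^2] = 2^{2/q}\,\Gamma(\tfrac{d}{2}+\tfrac{2}{q})/\Gamma(\tfrac{d}{2})$, which plugs directly into the displayed formula \eqref{eq:qep_cov}. The asymptotic statement $\mCv(\bu) \overset{\cdot}{\sim} d^{\frac{2}{q}-1}\bC$ then follows from the same Stirling-type asymptotic $\Gamma(x+\alpha)\overset{\cdot}{\sim}\Gamma(x)x^\alpha$ used in the proof of Corollary \ref{cor:moments}, applied with $x = d/2$ and $\alpha = 2/q$: the prefactor becomes $2^{2/q}(d/2)^{2/q}/d = d^{2/q-1}$.

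The main obstacle, if any, is a presentational one: justifying $\mCv(S) = \tfrac{1}{d}\bI_d$ cleanly and confirming that the stochastic representation in \eqref{eq:stoch_EC} indeed supplies an $R$ independent of $S$ (standard for elliptic distributions, but worth citing). Once those are in place the computation is mechanical. I would likely just reference Theorem 2.6.4 of \cite{fang1990generalized} for the independence and the sphere-covariance fact, so the whole proof reduces to a one-line substitution of Corollary \ref{cor:moments} into $\mCv(\bu) = \E[R^2]\bC/d$.
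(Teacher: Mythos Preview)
Your proposal is correct and essentially identical to the paper's proof: the paper simply invokes Theorem 2.6.4 of \cite{fang1990generalized} to obtain $\E[\bu]=\bmu$ and $\mCv(\bu)=(\E[R^2]/\mathrm{rank}(\bC))\bC$, then substitutes $k=2$ from Corollary~\ref{cor:moments} and applies the same Stirling-type asymptotic. Your additional self-contained derivation of $\mCv(S)=\tfrac{1}{d}\bI_d$ via rotational invariance and the trace constraint is a nice unpacking of exactly what that cited theorem provides, but the route is the same.
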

\begin{proof}
See Appendix \ref{prop:3.2}.
\end{proof}

\subsection{$Q$-exponential Process as Probabilistic Definition of Besov Process}
To generalize $\bu\sim \qED_d(0, \bC)$ to a stochastic process, we want to scale it to $\bu^*=d^{\half-\frac{1}{q}}\bu$ so that its covariance is asymptotically finite. If $\bu\sim \qED_d(0, \bC)$, then we denote $\bu^*\sim \sqED_d(0, \bC)$ following a \emph{scaled} $q$-exponential distribution.
Let $\mC: L^q \rightarrow L^q$ be a kernel operator in the trace class, i.e. having eigen-pairs $\{\lambda_\ell, \phi_\ell(x)\}_{\ell=1}^\infty$ such that $\mC \phi_\ell(x)=\phi_\ell(x) \lambda_\ell$, $\Vert\phi_\ell\Vert_2=1$ for all $\ell\in\mbN$ and $\tr(\mC)=\sum_{\ell=1}^\infty \lambda_\ell<\infty$.
Now we are ready to define the \emph{$q$-exponential process (Q-EP)} with the scaled $q$-exponential distribution.
\begin{dfn}[Q-EP]\label{dfn:qep}
    A (centered) $q$-exponential process $u(x)$ with kernel $\mC$ in the trace class, $\qEP(0, \mC)$, is a collection of random variables such that any finite set, $\bu=(u(x_1),\cdots u(x_d))$, follows a scaled multivariate $q$-exponential distribution, i.e. $\bu\sim \sqED_d(0, \bC)$.
\end{dfn}
Note, the process is defined on the $d^\star$-dimensional space depending on the applications ($x\in\mbR^{d^\star}$); while $d$ refers to the dimension of discretized process ($\bu\in\mbR^d$).
While Q-EP reduces to GP when $q=2$, $q=1$ is often adopted for imaging analysis as an ``edge-preserving" prior. Illustrated in Figure \ref{fig:linv_varyingq} for selected $q\in(0,2]$, smaller $q$ leads to sharper image reconstruction with varying $q$ interpolating between different regularization effects.
\begin{figure}[t]
    \centering
    \includegraphics[width=1\textwidth,height=.2\textwidth]{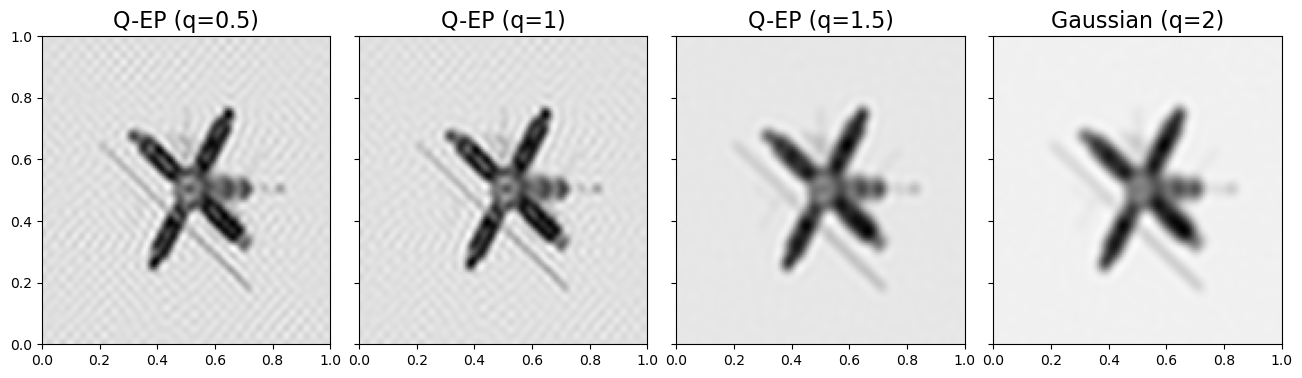}
    \caption{Image of satellite: MAP estimates by Q-EP with varying $q$ parameters.}
    \label{fig:linv_varyingq}
\end{figure}
Both Besov and Q-EP are valid stochastic processes stemming from the $q$-exponential distribution $\pi_q$. They are both designed to generalize GP to have sharper regularization (through $q$) but Q-EP has advantages in 1) the capability of specifying correlation structure directly and 2) the tractable prediction formula.

It follows from \eqref{eq:bsv_expn} immediately that the covariance of the Besov process $u(\cdot)$ at two points $x, x'\in\mbR^{d^\star}$:
\begin{equation}\label{eq:serexp}
\mCv(u(x), u(x')) = \sum_{\ell =1}^\infty \gamma_\ell^2 \phi_\ell(x) \otimes \phi_\ell(x')
\end{equation}
Although the smoothness and correlation strength of Besov process can be configured by proper orthonormal basis $\{\phi_\ell\}$ as in \eqref{eq:serexp}, it is less straightforward than the kernel function working for GP.
On the other hand, Q-EP has more freedom on the correlation structure through \eqref{eq:qep_cov} with flexible choices from a large class of kernels including powered exponential, Mat\'ern, etc. where we can directly specify the correlation length.

While Q-EP can be viewed as a probabilistic definition of Besov, the following theorem further establishes their connection in sharing equivalent series representations.
\begin{thm}[Karhunen-Lo\'eve]
If $u(x)\sim \qEP(0,\mC)$ with a trace class operator $\mC$ having eigen-pairs $\{\lambda_\ell, \phi_\ell(x)\}_{\ell=1}^\infty$ such that $\mC \phi_\ell(x)=\phi_\ell(x) \lambda_\ell$, $\Vert\phi_\ell\Vert_2=1$ for all $\ell\in\mbN$ and $\sum_{\ell=1}^\infty \lambda_\ell<\infty$, then we have the following series representation for $u(x)$:
\begin{equation}
u(x) = \sum_{\ell=1}^\infty u_\ell \phi_\ell(x), \quad u_\ell:=\int_D u(x) \phi_\ell(x) \overset{ind}{\sim} \sqED(0, \lambda_\ell)
\end{equation}
where $\E[u_\ell]=0$ and $\mCv(u_\ell, u_{\ell'})=\lambda_\ell\delta_{\ell\ell'}$ with Dirac function $\delta_{\ell\ell'}=1$ if $\ell=\ell'$ and $0$ otherwise.
\end{thm}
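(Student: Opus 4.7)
The plan is to reduce the statement to a finite-dimensional assertion that can be handled by the stochastic representation and marginalization properties developed earlier, and then to pass to the limit using the classical Karhunen–Loève (KL) machinery applied to the second-order process $u$. Since Proposition \ref{prop:mean_cov} (together with the scaling by $d^{\frac12-\frac1q}$ used to define $\sqED$) shows that $u(x)$ is a mean-zero process whose covariance kernel is exactly $\mC(x,x')$ in the limit, and $\mC$ is assumed trace class, the classical KL theorem for general second-order processes already guarantees: (i) convergence of the series $u(x)=\sum_\ell u_\ell \phi_\ell(x)$ in $L^2(\Omega\times D)$, (ii) $\E[u_\ell]=0$, and (iii) $\mCv(u_\ell,u_{\ell'})=\lambda_\ell\delta_{\ell\ell'}$. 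So the only truly new content is identifying the marginal law of each coefficient $u_\ell=\int_D u(x)\phi_\ell(x)\,dx$ as $\qED(0,\lambda_\ell)$.

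For that identification I would approximate the integral by a quadrature sum $u_\ell^{(N)}=\sum_{i=1}^N w_i\, u(x_i)\phi_\ell(x_i)$ over a partition of $D$ with weights $w_i$. By Definition \ref{dfn:qep}, the vector $\bu^{(N)}=(u(x_1),\ldots,u(x_N))$ is distributed as $\sqED_N(0,\bC_N)$ with $\bC_N=[\mC(x_i,x_j)]$, and $u_\ell^{(N)}=\tp{a_N}\bu^{(N)}$ for the deterministic vector $a_N=(w_1\phi_\ell(x_1),\ldots,w_N\phi_\ell(x_N))$. The key step is then the linear-projection property inherited from the elliptic representation \eqref{eq:stoch_EC}: since $\bu^{(N)}=d^{\frac12-\frac1q}R\bL S$, any linear functional $\tp{a_N}\bu^{(N)}$ equals $d^{\frac12-\frac1q}R\cdot (\tp{a_N}\bL S)$, and the direction $\tp{a_N}\bL S/\Vert \tp{\bL}a_N\Vert$ is a uniform scalar on $\mS^0$ by rotational invariance of the sphere, yielding a one-dimensional elliptic law with variance parameter $\tp{a_N}\bC_N a_N$. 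Combining this with Kano's consistency (Theorem \ref{thm:consistency}) applied to the specific generator in \eqref{eq:density_generator} shows that the marginal is precisely $\qED(0,\tp{a_N}\bC_N a_N)$, with the dimension-dependent normalization canceling against the $d^{\frac12-\frac1q}$ scaling built into $\sqED$.

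To finish, I would let $N\to\infty$ so that $\tp{a_N}\bC_N a_N\to \int\int\mC(x,x')\phi_\ell(x)\phi_\ell(x')\,dx\,dx'=\lambda_\ell$, using the spectral property $\mC\phi_\ell=\lambda_\ell\phi_\ell$ and $\Vert\phi_\ell\Vert_2=1$. Because $\qED(0,\sigma^2)$ depends continuously on $\sigma^2$, the weak limit of $u_\ell^{(N)}$ is $\qED(0,\lambda_\ell)$, which must coincide with the $L^2$-limit $u_\ell$. The ``independent'' qualifier in the statement is the same type of claim made in KL expansions of elliptic processes: the joint vector $(u_1,\ldots,u_M)$ is $\sqED_M(0,\diag(\lambda_1,\ldots,\lambda_M))$ by the same linear-projection argument with $A_N$ whose rows are $a_N^{(\ell)}$, so the components are uncorrelated with diagonal elliptic joint law (and are marginally $\qED(0,\lambda_\ell)$).

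The main obstacle is the careful bookkeeping of the $d$-dependent exponent $d^{\frac12-\frac1q}$ and of the dimension-dependent generator $g$ in \eqref{eq:density_generator} as one passes from the quadrature dimension $N$ to the limit. Concretely, one must show that for a linear functional with fixed target, the scaling and the dimension-dependent prefactor in $g$ conspire so that the limiting one-dimensional law is exactly $\qED(0,\lambda_\ell)$ rather than some other member of the elliptic family; this is where Kano's characterization (Theorem \ref{thm:consistency}) is indispensable, as it guarantees that the family generated by \eqref{eq:density_generator} is preserved under arbitrary linear projections and marginalizations and therefore commutes with the $N\to\infty$ limit.
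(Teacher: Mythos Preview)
Your proposal is correct and follows essentially the same route as the paper: discretize the integral defining $u_\ell$ (you use a quadrature sum, the paper uses simple-function approximation of $\phi_\ell$), invoke the closure of the elliptic family under linear maps to identify the finite-dimensional law as $\qED$, track the scaling factor $\alpha_d\to 1$ as $d\to\infty$, and then compute the covariance via the spectral relation $\mC\phi_\ell=\lambda_\ell\phi_\ell$. The paper's argument is terser---it cites the linear-combination property of elliptic distributions directly rather than unpacking the stochastic representation \eqref{eq:stoch_EC} as you do---but the skeleton is the same.

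Two places where you are more careful than the paper are worth flagging. First, you explicitly appeal to the classical second-order KL theorem to secure $L^2$ convergence of the series itself; the paper's proof addresses only the marginal law of each $u_\ell$ and the covariance, leaving series convergence implicit. Second, your remark that the joint $(u_1,\ldots,u_M)$ is $\sqED_M(0,\diag(\lambda_\ell))$---hence uncorrelated with the correct marginals but not genuinely independent for $q\neq 2$---is a sharper reading of the ``$\overset{ind}{\sim}$'' claim than the paper offers; the paper does not comment on this distinction.
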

\begin{proof}
See Appendix \ref{thm:3.5}.
\end{proof}
\begin{rk}
If we factor $\sqrt{\lambda_\ell}$ out of $u_\ell$, we have the following expansion for Q-EP more comparable to \eqref{eq:bsv_expn} for Besov:
\begin{equation}
u(x) = \sum_{\ell=1}^\infty \sqrt{\lambda_\ell} u_\ell \phi_\ell(x), \quad u_\ell \overset{iid}{\sim} \qED(0, 1) \propto \pi_q(\cdot)
\end{equation}
\end{rk}

\subsection{Bayesian Modeling with $Q$-exponential Process}
Now let us consider the generic Bayesian regression model:
\begin{equation}\label{eq:regression}
\begin{aligned}
y &= u(x)+\eps, \quad \eps \sim L(\cdot;0,\Sigma) \\
u &\sim \mu_0(du)
\end{aligned}
\end{equation}
where $L(\cdot;0,\Sigma)$ denotes some likelihood model with zero mean and covariance $\Sigma$, and the mean function $u$ can be given a prior either Besov or Q-EP.
Because of the definition \eqref{eq:bsv_expn} in terms of expanded series, there is no explicit formula for the posterior prediction using Besov prior. 
By contrast, a tractable formula exists for the posterior predictive distribution for \eqref{eq:regression} with Q-EP prior $\mu_0=\qEP(0, \mC)$ when the likelihood happens to be $L(\cdot;0,C)=\qED(\bzero,\bC)$, as stated in the following theorem.
\begin{thm}[Posterior Prediction]\label{thm:prediction}
Given covariates $\bx=\{x_i\}_{i=1}^N$ and observations $\by=\{y_i\}_{i=1}^N$ following $\qED$ in the model \eqref{eq:regression} with $\qEP$ prior for the same $q>0$, we have the following posterior predictive distribution for $u(x_*)$ at (a) new point(s) $x_*$:
\begin{equation}
\begin{aligned}
u(x_*)|\by,\bx,x_* &\sim \qED(\bmu^*, \bC^*), \quad
\bmu^* = \tp{\bC}_* (\bC + \Sigma)^{-1} \by, \quad
\bC^* = \bC_{**} - \tp{\bC}_* (\bC + \Sigma)^{-1} \bC_*
\end{aligned}
\end{equation}
where $\bC=\mC(\bx,\bx)$, $\bC_*=\mC(\bx,x_*)$, and $\bC_{**}=\mC(x_*,x_*)$.
\end{thm}
\begin{proof}
See Appendix \ref{thm:3.6}.
\end{proof}
\begin{rk}
From Theorem \ref{thm:prediction}, we know that Q-EP has same predictive mean as GP. But their predictive covariances differ by a constant $\left(\frac{2}{d}\right)^{\frac{2}{q}} \frac{\Gamma(\frac{d}{2}+\frac{2}{q})}{\Gamma(\frac{d}{2})}$ (asymptotically 1) based on Proposition \ref{prop:mean_cov}.
\end{rk}

When the likelihood $L$ is not Q-EP, e.g. multinomial, such conjugacy is absent. Then we refer to the following sampling method for the posterior inference.

\subsubsection{Inference by White-Noise MCMC}\label{sec:inference}
We follow \cite{Chen_2018} to consider the pushforward ($T$) of Gaussian white noise $\nu_0$ for non-Gaussian measures $\mu_0 =T^\# \nu_0$. More specifically, we construct a measurable transformation $T: \bz \to \bu$ that maps standard Gaussian random variables to $q$-exponential random variables.
The transformation based on the stochastic representation \eqref{eq:stoch_EC} is more straightforward than that for Besov based on series expansions proposed by \cite{Chen_2018}.

Recall the stochastic representation \eqref{eq:stoch_EC} of $\bu\sim \qED_d(\bzero, \bC)$: $\bu=R\bL S$ with $R^q\sim \chi^2_d$ and $S\sim \mathrm{Unif}(\mS^{d+1})$.
We can rewrite
    $S\overset{d}{=} \frac{\bz}{\Vert\bz\Vert_2}, \quad R^q \overset{d}{=} \Vert\bz\Vert_2^2, \quad \textrm{for}\; \bz\sim \mN_d(\bzero, \bI_d)$.
Therefore, we have the pushforward mapping ($T$) and its inverse ($T^{-1}$) as
\begin{equation}\label{eq:pushforward}
    \bu = T(\bz) = \bL \bz \Vert\bz\Vert^{\frac{2}{q}-1}, \quad \bz=T^{-1}(\bu)=\bL^{-1}\bu \Vert \bL^{-1}\bu\Vert^{\frac{q}{2}-1}
\end{equation}
Figure \ref{fig:wnrep} illustrates that sampling with the white-noise representation \eqref{eq:pushforward} is indistinguishable from sampling by the stochastic representation \eqref{eq:stoch_EC}.
Then we can apply such white-noise representation to dimension-independent MCMC algorithms including preconditioned Crank-Nicolson (pCN) \cite{cotter13}, infinite-dimensional Metropolis adjusted Langevin algorithm ($\infty$-MALA) \citep{beskos08}, infinite-dimensional Hamiltonian Monte Carlo ($\infty$-HMC) \citep{beskos11}, and infinite-dimensional manifold MALA ($\infty$-mMALA) \cite{beskos14} and HMC ($\infty$-mHMC) \cite{beskos2017}. See Algorithm \ref{alg:wpCN} for an implementation on pCN, hence named \emph{white-noise pCN (wn-pCN)}. 

\subsubsection{Hyper-parameter Tuning}
As in GP, there are hyper-parameters in the covariance function of Q-EP, e.g. variance magnitude ($\sigma^2$) and correlation length ($\rho$), that require careful adjustment and fine tuning.
If the data process $y(x)$ and its mean $u(x)$ are both Q-EP with the same $q$, then we can have the marginal likelihood \cite{Rasmussen_2005} as another Q-EP (c.f. Theorem \ref{thm:prediction}).
In general, when there is no such tractability, hyper-parameter tuning by optimizing the marginal likelihood is unavailable.
However, we could impose conjugate hyper-priors on some parameters or even marginalize them to facilitate the inference of them (See Appendix \ref{prop:3.3} for a proposition on such conditional conjugacy for the variance magnitude $\sigma^2$).


To tune the correlation length ($\rho$), we could impose a hyper-prior for $\rho$ and sample from $p(\rho|\bu)$. We then alternate updating $\bu$ and hyper-parameters in a Gibbs scheme.
In general, one could also use Bayesian optimization methods \cite{Li_2017,Falkner_2018,Awad_2021} for the hyper-parameter tuning.

\section{NUMERICAL EXPERIMENTS}\label{sec:numerics}
In this section, we compare GP, Besov and Q-EP by modeling time series (temporal), reconstructing images (spatial) from computed tomography and solving a (spatiotemporal) inverse problem (Appendix \ref{sec:adif}).
These numerical experiments demonstrate that our proposed Q-EP enables faster convergence in obtaining a better maximum a posterior (MAP) estimate. What is more, white-noise MCMC based inference 
provides appropriate uncertainty quantification (UQ) (by the posterior standard deviation). 
More numerical results can be found in the supplementary materials which also contain some demonstration codes.
All the computer codes are publicly available at {\url{https://github.com/lanzithinking/Q-EXP}.

\subsection{Time Series Modeling}

\begin{figure}[tbp]
\begin{subfigure}[b]{.6\textwidth}
\includegraphics[width=1\textwidth,height=.265\textwidth]{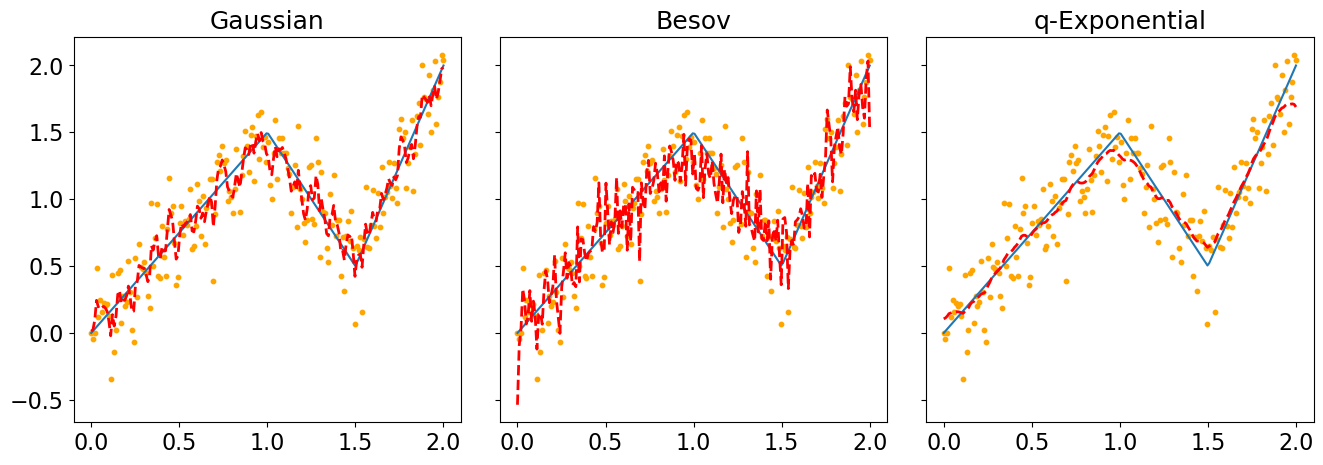}
\caption{Time series with sharp turnings (model fitting).}
\label{fig:ts_turning_maps}
\end{subfigure}
\begin{subfigure}[b]{.4\textwidth}
\includegraphics[width=1\textwidth,height=.4\textwidth]{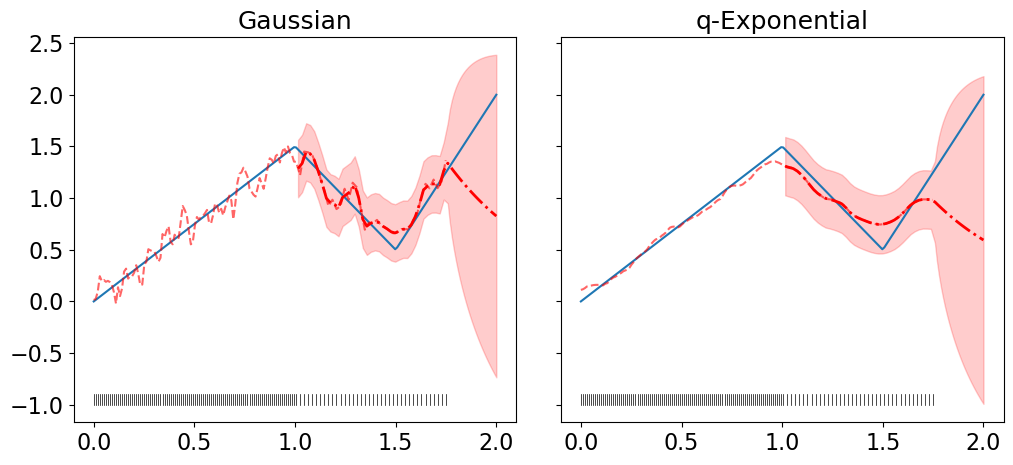}
\caption{Time series with turnings (prediction).}
\label{fig:ts_turning_preds}
\end{subfigure}
\begin{subfigure}[b]{.6\textwidth}
\includegraphics[width=1\textwidth,height=.3\textwidth]{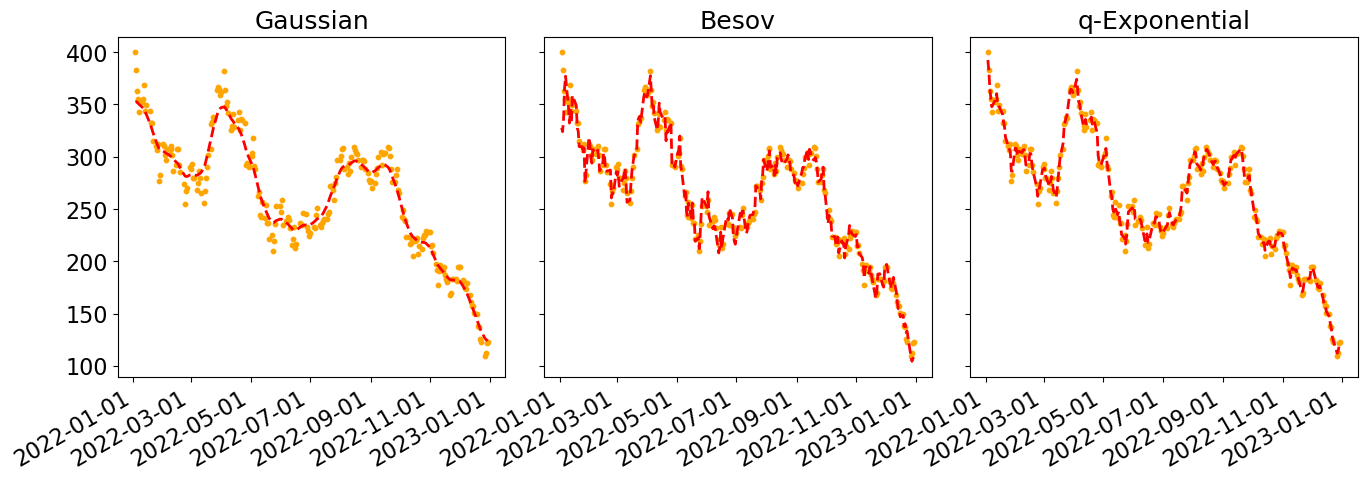}
\caption{Tesla stock prices in 2022 (model fitting).}
\label{fig:tesla_maps}
\end{subfigure}
\begin{subfigure}[b]{.4\textwidth}
\includegraphics[width=1\textwidth,height=.45\textwidth]{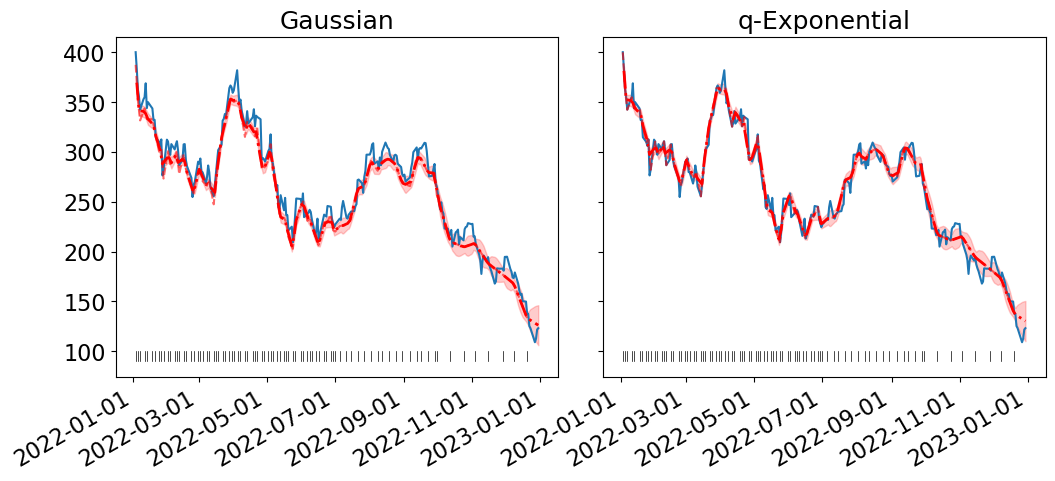}
\caption{Tesla stock prices in 2022 (prediction).}
\label{fig:tesla_preds}
\end{subfigure}
\caption{(a)(c) MAP estimates by GP (left), Besov (middle) and Q-EP (right) models.
(b)(d) Predictions by GP (left) and Q-EP (right) models.
Orange dots are actual realizations (data points).
Blue solid lines are true trajectories. Black ticks indicate the training data points. Red dashed lines are MAP estimates. Red dot-dashed lines are predictions with shaded region being credible bands.}
\label{fig:ts_maps}
\end{figure}

We first consider two simulated time series, one with step jumps and the other with sharp turnings, whose true trajectories are as follows:
\begin{equation*}
\begin{aligned}
u_\text{J}(t) &=
1, && t\in[0,1]; \quad
0.5, && t\in (1, 1.5]; \quad
2, && t\in (1.5,2]; \quad
0, && otherwise \\
u_\text{T}(t) &=
1.5t, && t\in[0,1]; \quad
3.5-2t, && t\in (1, 1.5]; \quad
3t-4, && t\in (1.5,2]; \quad
0, && otherwise
\end{aligned}
\end{equation*}

We generate the time series $\{y_i\}$ by adding Gaussian noises to the true trajectories evaluated at $N=200$ evenly spaced points $\{t_i\}$ in $[0,2]$, that is, 
$y_{i}^*=u_*(t_i)+ \eps_i, \; \eps_i\overset{ind}{\sim} N(0,\sigma^2_*(t_i)), \; i=1,\cdots, N, \; *=\text{J}, \text{T}.$ 
Let $\sigma_\text{J}/\Vert u_\text{J}\Vert=0.015 \;for \; t_i\in[0,2]$ and  
 $\sigma_\text{T}/\Vert u_\text{T}\Vert= 0.01\; if \; t_i\in[0,1]; \; 0.07 \; if\; t_i\in (1, 2]$.
In addition, we also consider two real data sets of Tesla and Google stock prices in 2022.
See Figures \ref{fig:ts_maps} (and Figures \ref{fig:ts1_maps}) for the true trajectories (blue lines) and realizations (orange dots) respectively.

We use the above likelihood and test three priors: GP, Besov and Q-EP. For Besov, we choose the Fourier basis $\phi_0(t)=\sqrt{2}, \; \phi_\ell(t)=\cos(\pi \ell t), \; \ell\in\mbN$ (results with other wavelet bases including Haar, Shannon, Meyer and Mexican Hat are worse hence omitted). For both GP and Q-EP, we adopt the Mat\'ern kernel with $\nu=\half$, $\sigma^2=1$, $\rho=0.5$ and $s=1$: 
$C(t,t')=\sigma^2 \frac{2^{1-\nu}}{\Gamma(\nu)} w^\nu K_\nu(w), \quad w=\sqrt{2\nu} (\Vert t-t'\Vert/\rho)^s$.
In both Besov and Q-EP, we set $q=1$. Figures \ref{fig:ts_turning_maps} and \ref{fig:tesla_maps} (and Figures \ref{fig:ts_jump_maps} and \ref{fig:google_maps}) compare the MAP estimates (red dashed lines). 
We can see that Q-EP yields the best estimates closest to the true trajectories in the simulation and the best fit to the Tesla/Google stock prices.
We also investigate the negative posterior densities and relative errors, $\Vert \hat{u}_*-u_*\Vert/\Vert u_*\Vert$, as functions of iterations in Figure \ref{fig:ts_errs}. 
Though incomparable in the absolute values, the negative posterior densities indicate faster convergence in both GP and Q-EP models than in Besov model. The error reducing plots on the right panels of subplots in Figure \ref{fig:ts_errs} indicate that Q-EP prior model can achieve the smallest errors. 
Table \ref{tab:ts_rmse} 
compares them in terms of root mean of squared error (RMSE) and log-likelihood (LL).

\begin{table}[htbp]
\caption{Time series modeling: root mean of squared errors (RMSE) and log-likelihood (LL) values at MAP estimates by GP, Besov and Q-EP prior models.}
\centering
\begin{tabular}{l|lll|lll}
\toprule
& \multicolumn{3}{c|}{ root mean squared errors (RMSE)} & \multicolumn{3}{|c}{log-likelihood (LL)} \\
\cmidrule{1-4} \cmidrule{5-7}
Data Sets &          GP & Besov & Q-EP &   GP & Besov & Q-EP\\
\midrule
simulation (jumps) & 1.2702 &  2.1603 & {\bf 1.1083}   & -31.4582 &  -89.8549 & -74.0590 \\
simulation (turnings) &  1.4270 &   2.4556 &  {\bf 0.9987} &  -39.8234 &  -56.7874 & -87.3124 \\
\midrule
Tesla stocks &  180.3769 &   136.8769 &  {\bf 51.2236} &  -488.6458 &  -281.3796 & -39.4070 \\
Google stocks & 44.4236 & 39.4809 & {\bf 36.8686} & -386.1546 & -305.0058 & -265.9790 \\
\bottomrule
\end{tabular}
\label{tab:ts_rmse}
\end{table}

Then we consider the prediction problem. 
In the simulations, the last $1/8$ portion and every other of the last but $3/8$ part of the data points are selected for testing. The models with GP and Q-EP priors are trained on the rest of the data, as indicated by short ``ticks" in Figures \ref{fig:ts_turning_preds} and \ref{fig:tesla_preds} (and Figures \ref{fig:ts_jump_preds} and \ref{fig:google_preds}). 
For the Tesla/google stocks, we select every other day in the first half year, every 4 days in the 3rd quarter and every 8 days in the last quarter for training and test on the rest.
They pose challenges on both interpolation (among observations) and extrapolation (at no-observation region) tasks. As we can see in those figures, uncertainty grows as the data become scarce. Nevertheless, the Q-EP yields smaller errors than GP. Note, such prediction is not immediately available for models with Besov prior.

\subsection{Computed Tomography Imaging}
\begin{figure*}[t]
\centering
\includegraphics[width=1\textwidth,height=.2\textwidth]{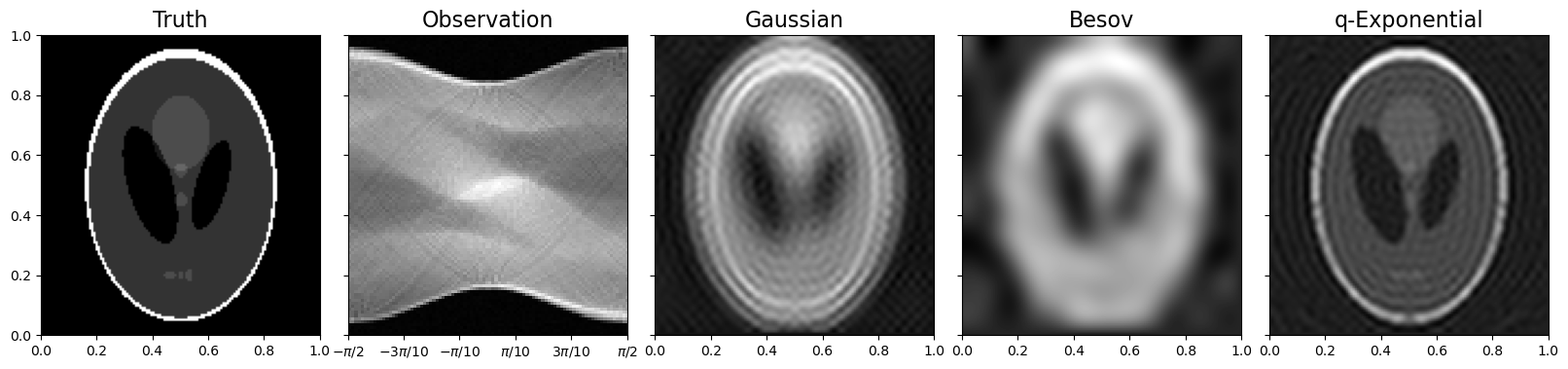}
\caption{Shepp-Logan phantom: true image, observation (sinogram), and MAP estimates by GP, Besov and Q-EP models with relative errors $68.10\%$, $70.27\%$ and ${\bf 40.87\%}$ respectively.}
\label{fig:ct_map}
\end{figure*}
Computed tomography (CT) is a medical imaging technique used to obtain detailed internal images of human body. CT scanners use a rotating X-ray tube and a row of detectors to measure X-ray attenuations by different tissues inside the body from different angles. 
Denote the true imaging as a function $u(x)$ on the square unit $D=[0,1]^2$ taking values as the pixels.
The observed data, $\by$, (a.k.a. sinogram) are results of Radon transformation ($\bA$) of the discretized $n\times n$ field $\bu$ with $n_\theta$ angles and $n_s$ sensors, contaminated by noise $\vect\eps$ \citep{Bardsley_2011}:
\begin{equation*}
\by = \bA \bu + \vect\eps, \quad \vect\eps \sim \mN(\bzero, \sigma^2_\eps \bI), \quad \by\in\mbR^{n_\theta n_s}, \quad \bA\in\mbR^{n_\theta n_s\times n^2}, \quad \bu\in\mbR^{n^2}
\end{equation*}
In general $n_\theta n_s \ll d=n^2$ so the linear inverse problem is under-determined. Baysian approach could fill useful prior information (e.g. edges) in the sparse data.

We first consider
the Shepp–Logan phantom, a standard test image created by Shepp and Logan in \cite{Shepp_1974} to model a human head and to test image reconstruction algorithms. In this simulation, we create the true image $u^\dagger$ for a resolution of $n^2=128\times 128$ and project it at $n_\theta=90$ angles with $n_s=100$ equally spaced sensors. The generated sinogram is then added by noise with signal noise ratio $\text{SNR}=\Vert \bA\bu^\dagger\Vert/\Vert\vect\eps\Vert=100$. The first two panels of Figure \ref{fig:ct_map} show the truth and the observation.

\begin{table*}[htbp]\small
\caption{Posterior estimates of Shepp–Logan phantom by GP, Besov and Q-EP prior models:
relative error, $\textrm{RLE}:=\Vert \hat u-u^\dagger\Vert/\Vert u^\dagger\Vert$, of MAP ($\hat u =u^*$) and posterior mean ($\hat u=\bar{u}$) respectively, log-likelihood (LL), PSNR, SSIM and HarrPSI. 
Numbers in the bracket are standard deviations obtained repeating the experiments for 10 times with different random seeds.}
\centering
\begin{tabular}{l|lll|lll}
\toprule
  \multicolumn{4}{c|}{MAP} & \multicolumn{3}{|c}{Posterior Mean} \\
\cmidrule{1-4} \cmidrule{5-7}
 &          GP & Besov & Q-EP   &    GP & Besov & Q-EP \\
\midrule
RLE & 0.6810 &  0.7027 & {\bf 0.4087}  &  0.4917(6.16e-7) & 0.4894(3.53e-5) & {\bf 0.4890}(4.79e-5) \\
LL & -1.55e+6 &  -1.54e+6 & -1.57e+5   &  -5.21e+5(8.47) & -4.80e+5(196.34) & -4.56e+5(307.97) \\
PSNR & 15.5531 & 15.2806 & {\bf 19.9887} & 18.3826(1.09e-5) & 18.4226(6.27e-4) & {\bf 18.4303}(8.51e-4) \\
SSIM & 0.4028 & 0.3703 & {\bf 0.5967} & {\bf 0.5561}(3.92e-7) & 0.5535(2.38e-4) & 0.5403(5.26e-4) \\
HaarPSI & 0.0961 & 0.0870 & {\bf 0.3105} & 0.3126(1.52e-8) & {\bf 0.3126}(3.36e-4) & 0.3122(3.06e-4) \\
\bottomrule
\end{tabular}
\label{tab:CT}
\end{table*}

Note, the computation involving a full sized ($d\times d$) kernel matrix $\bC$ for GP and Q-EP is prohibitive. Therefore, we consider its Mercer's expansion \eqref{eq:serexp} with Fourier basis for a truncation at the first $L=2000$ items.
Figure \ref{fig:ct_map} shows that while GP and Besov models reconstruct very blurry phantom images, the Q-EP prior model produces MAP estimate of the highest quality. 
For each of the three models, we also apply wn-pCN to generate 10000 posterior samples (after discarding 5000) and use them to reconstruct $u$ (posterior mean or median) and quantify uncertainty (posterior standard deviation).

Table \ref{tab:CT} summarizes the errors relative to MAP ($u^*$) and posterior mean ($\bar{u}$) respectively, $\Vert \hat{u}-u^\dagger\Vert/\Vert u^\dagger\Vert$ (with $\hat{u}$ being $u^*$ or $\bar{u}$), 
log-likelihood (LL), and several quality metrics in imaging analysis including the peak signal-to-noise ratio (PSNR) \citep{Faragallah_2021}, the structured similarity index (SSIM) \citep{wang2004image}, and the Haar wavelet-based perceptual similarity index (HaarPSI) \citep{reisenhofer2018haar}.
Q-EP attains the lowest error and highest quality scores in most cases.
In Figure \ref{fig:ct_std}, 
we compare the uncertainty by these models. It seems that GP has uncertainty filed with more recognizable shape than the other two. However, the posterior standard deviation by GP is much smaller (about $1\%$ of that with Q-EP) compared with the other two. Therefore, this raises a red flag that GP could be over-confident about a less accurate estimate.

\begin{figure*}[t]
\centering
\includegraphics[width=1\textwidth,height=.2\textwidth]{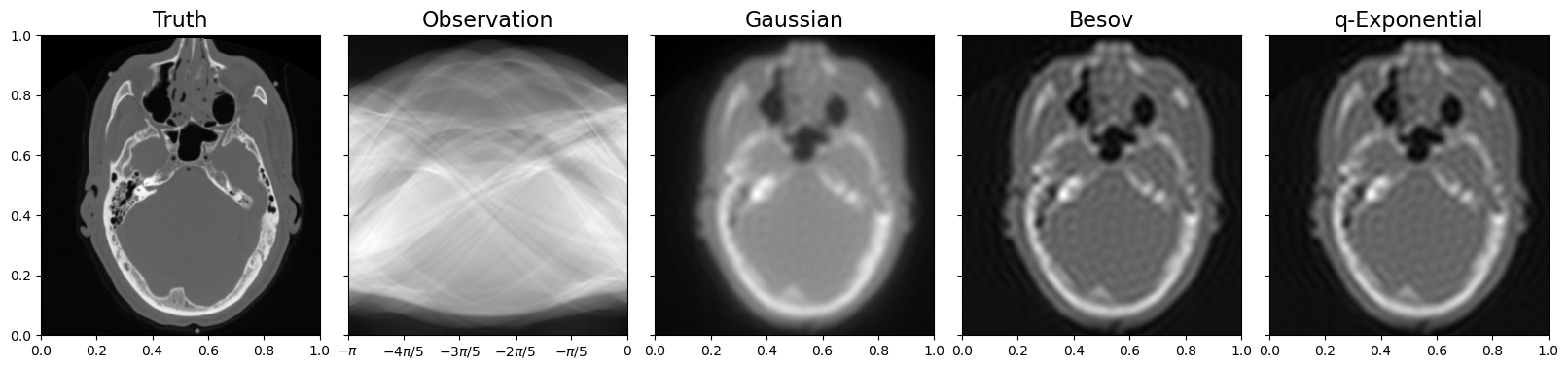}
\includegraphics[width=1\textwidth,height=.2\textwidth]{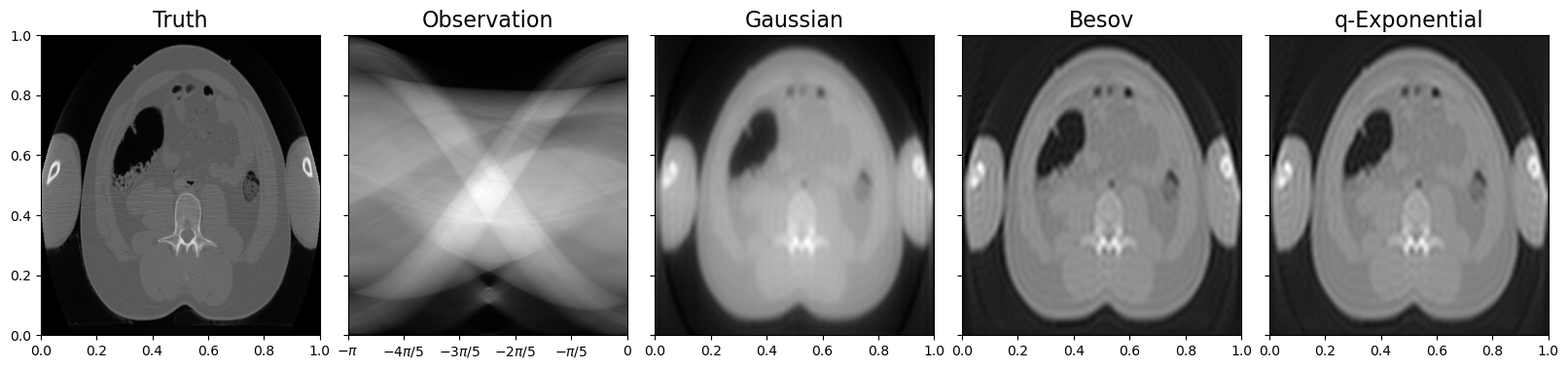}
\caption{CT of human head (upper) and torso (lower): true image, observation (sinogram), and MAP estimates by GP, Besov and Q-EP models with relative errors $29.99\%$, $22.41\%$ and ${\bf 22.24\%}$ (for head) and $26.11\%$, $21.77\%$ and ${\bf 21.53\%}$ (for torso) respectively.}
\label{fig:ct_human_map}
\end{figure*}

Finally, we apply these methods to CT scans of a human cadaver and torso from the Visible Human Project \citep{VHP}.
These images contain $n^2=512\times 512$ pixels and the sinograms are obtained with $n_\theta=200$ angles and $n_s=512$ sensors.
The first two panels of each row in Figure \ref{fig:ct_human_map} show a highly calibrated CT reconstruction (treated as ``truth") and the observed sinogram. 
The rest three panels illustrate that both Besov and Q-EP models outperform GP in reconstructions, as verified in the quantitative summaries in Table \ref{tab:CT_human}.
Figure \ref{fig:ct_human_std} indicates that GP tends to underestimate the uncertainty.

In these CT reconstruction examples, we observe larger discrepancy of performance between Besov and Q-EP in the low-dimensional data-sparse application (Shepp–Logan phantom at resolution $n^2=128\times 128$ with $n_\theta=90$ angles and $n_s=100$ sensors) compared with the high-dimensional data-intensive applications (two human body CTs at resolution $n^2=512\times 512$ with $n_\theta=200$ angles and $n_s=512$ sensors). This may be due to the truncation in Mercer's kernel representation \eqref{eq:serexp} and different rates of posterior contraction \citep{Ghosal_2000, Ghosal_2017,Sergios_2021}. We will explore them in another journal paper.

\section{CONCLUSION}\label{sec:conclusion}
In this paper, we propose the $q$-exponential process (Q-EP) as a prior on $L^q$ functions with a flexible parameter $q>0$ to control the degree of regularization. Usually, $q=1$ is adopted to capture abrupt changes or sharp contrast in data such as edges in the image as the Besov prior has recently gained popularity for.
Compared with GP, Q-EP can impose sharper regularization through $q$. Compared with Besov, Q-EP enjoys the explicit formula with more control on the correlation structure as GP.
The numerical experiments in time series modeling, image reconstruction and Bayesian inverse problems 
demonstrate our proposed Q-EP is superior in Bayesian functional data modeling.

In the numerical experiments of current work, we manually grid-search for the optimal hyper-parameters. The reported results are not sensitive to some of these hyper-parameters such as the variance magnitude ($\sigma^2$) and the correlation length ($\rho$) but may change drastically to others like the regularity parameter ($\nu$) and the smoothness parameter ($s$). In future, we will incorporate hyper-priors for some of those parameters and adopt a hierarchical scheme to overcome such shortcoming.
We plan to study the properties such as regularity of function draws of Q-EP and the posterior contraction, and compare the contraction rates among GP, Besov and Q-EP priors \citep{Ghosal_2000, Ghosal_2017, Sergios_2021}. Future work will also consider operator based kernels such as graph Laplacian \citep{dashti2017,DUNLOP2020,Lan_2022}.

\begin{ack}
SL is supported by NSF grant DMS-2134256.
\end{ack}


\clearpage
{
\small
\bibliographystyle{plain}
\bibliography{ref}
}



\newpage
\appendix
\onecolumn

\counterwithin{table}{section}
\counterwithin{figure}{section}

\begin{center}
    \Large Supplement Document for ``Bayesian Learning via Q-Exponential Process"
\end{center}

\section{PROOFS}

\subsection{Proof of Theorem 3.3}\label{thm:3.3}

\begin{proof}
First we prove the exchangeability of $\qED_d(\bmu, \bC)$ with general (non-identity) covariance matrix $\bC=[\mC(t_i, t_j)]_{d\times d}$ for some kernel function $\mC$. It actually holds for all elliptic distributions including MVN. Their densities contain the essential quadratic form $r(\bu) = \tp{\bu} \bC^{-1} \bu$ which is invariant under any permutation of coordinates.

Denote $\bu=[u_{t_1}, \cdots, u_{t_i}, \cdots, u_{t_j}, \cdots, u_{t_d}]^T$. Without loss of generality, we only need to show $r(\bu)$ is invariant by switching two coordinates, say, $t_i \leftrightarrow t_j$. 
Denote $\bu'=[u_{t_1}, \cdots, u_{t_j}, \cdots, u_{t_i}, \cdots, u_{t_d}]^T$. Switching $t_i$ and $t_j$ leads to a different covariance matrix $\bC'$ obtained by switching both $i$-th and $j$-th rows and columns simultaneously in $\bC$. If we denote the elementary matrix $E_{ij}$ as derived from switching $i$-th and $j$-th rows of the identity matrix $\bI$. Then we have 
\begin{equation*}
\bu' = E_{ij} \bu, \quad \bC' = E_{ij} \bC E_{ij}
\end{equation*} 
Note $E_{ij}$ is idempotent, i.e. $E_{ij}=E_{ij}^{-1}$. 
Therefore
\begin{equation*}
\tp{(\bu')} (\bC')^{-1} \bu' = \tp{\bu} E_{ij} E_{ij} \bC^{-1} E_{ij} E_{ij} \bu = \tp{\bu} \bC^{-1} \bu
\end{equation*}
Next, the consistency directly follows from Kano's consistency Theorem 3.2 with our choice of $g(r)$.
The proof is hence completed.
\end{proof}

\subsection{Theorem of Q-EP as a mixture of Gaussians}\label{thm:3.4}

\begin{thm}
Suppose $\bu \sim \qED_d(0, \bC)$ for $0<q<2$, then there exist an random variable $V>0$ and a standard normal random vector $\bZ\sim \mN_d(\bzero, \bI)$ independent of each other such that $\bu\overset{d}{=}\bZ/V$.
\end{thm}

\begin{proof}
Based on \cite{Andrews_1973}, it suffices to show $(-\frac{d}{d r})^k g(r) \geq 0$ for all $k\in \mbN$.
Observe that $g'(r)=\left[ (\frac{q}{2}-1)\frac{d}{2} r^{(\frac{q}{2}-1)\frac{d}{2}-1} - \frac{q}{4}  r^{(\frac{q}{2}-1)(\frac{d}{2}+1)} \right] \exp\{-\frac{r^{\frac{q}{2}}}{2}\} \leq 0$ when $q\leq 2$.
Denote $(-\frac{d}{dr})^k g(r) := p_k(r^{(\frac{q}{2}-1)/2}, r^{-1}) \exp\{-\frac{r^{\frac{q}{2}}}{2}\}$ where the coefficients of polynomial $p_k$ are all non-negative.
Then we have
$$\left(-\frac{d}{dr}\right)^{k+1} g(r) = \left[ -\frac{d}{dr} p_k(r^{(\frac{q}{2}-1)/2}, r^{-1}) + \frac{q}{4}  r^{(\frac{q}{2}-1)} p_k(r^{(\frac{q}{2}-1)/2}, r^{-1}) \right] \exp\{-\frac{r^{\frac{q}{2}}}{2}\}$$
where $p_{k+1}(r^{(\frac{q}{2}-1)/2}, r^{-1})$ being the term in the square bracket has all positive coefficients because the powers $(\frac{q}{2}-1)/2$ and $-1$ appear as coefficients in $\frac{d}{dr} p_k(r^{(\frac{q}{2}-1)/2}, r^{-1})$ and are both negative.
The proof is completed by induction.
\end{proof}

\subsection{Proposition of distribution of $r(\bu)$}\label{prop:3.1}

The following proposition determines the distribution of $R=\sqrt{r(\bu)}$ as $q$-root of a gamma (also $chi$-squared) distribution thus gives a complete recipe for generating random vector $\bu \sim \qED_d(0, \bC)$ based on the stochastic representation \eqref{eq:stoch_EC}.
\begin{prop}\label{prop:Rq_dist}
    If $\bu \sim \qED_d(0, \bC)$, then we have
    \begin{equation}
        R^q = r^\frac{q}{2} \sim \Gamma\left(\alpha=\frac{d}{2}, \beta=\half \right) = \chi^2_d, \quad and \quad \E[R^k] = 2^{\frac{k}{q}}\frac{\Gamma(\frac{d}{2}+\frac{k}{q})}{\Gamma(\frac{d}{2})} \overset{\cdot}{\sim} d^\frac{k}{q}, \; as\; d\to \infty,\; \forall  k\in \mbN
    \end{equation}
\end{prop}

\begin{proof}
With out chosen $g(r)$, 
the density of $r$ becomes
\begin{equation*}
    f(r) \propto r^{\frac{d}{2}-1} r^{\left(\frac{q}{2}-1\right)\frac{d}{2}} \exp\left\{-\frac{r^\frac{q}{2}}{2}\right\}
    = r^{\frac{q}{2}\cdot \frac{d}{2}-1} \exp\left\{-\frac{r^\frac{q}{2}}{2}\right\}
\end{equation*}
A change of variable $r \rightarrow r^{\frac{q}{2}}$ yields the density of $R^q = r^{\frac{q}{2}}$ that can be recognized as the density of $\chi^2_d$.


On the other hand,
since $v:=R^q \sim \Gamma\left(\alpha=\frac{d}{2}, \beta=\half \right)$, 
we have:
\begin{equation*}
\begin{split}
     \E[R^k] &
    =\int_0^\infty v^{\frac{k}{q}} f(v)dv
    = \frac{1}{\Gamma(\frac{d}{2})} \left(\frac{1}{2}\right)^{\frac{d}{2}} \int_0^\infty v^{\frac{k}{q}+\frac{d}{2}-1}\exp \left\{ -\frac{1}{2}v \right\}  dv \\
    &= 2^{\frac{k}{q}}\frac{\Gamma(\frac{d}{2}+\frac{k}{q})}{\Gamma(\frac{d}{2})} \overset{\cdot}{\sim} 2^{\frac{k}{q}} \left(\frac{d}{2}\right)^\frac{k}{q} = d^\frac{k}{q}
\end{split}
\end{equation*}
where we use  $\Gamma(x+\alpha)\overset{\cdot}{\sim} \Gamma(x)x^{\alpha}$ as $x\to \infty$ with $x=\frac{d}{2}$ and $\alpha=\frac{k}{q}$ when $d\to \infty$.
\end{proof}

\subsection{Proof of Proposition 3.1}\label{prop:3.2}

\begin{proof}
By Theorem 2.6.4 in \cite{fang1990generalized} for $\qED_d(\bmu, \bC)=\EC_d(\bmu,\bC,g)$ with our chosen $g$, 
we know $\E[\bu] = \bmu$ and $\mCv(\bu)=(\E[R^2]/\textrm{rank}(\bC))\bC$.
It follows by letting $k=2$ in Proposition A.1 
and using the similar asymptotic analysis.
\end{proof}

\subsection{Proof of Theorem 3.4}\label{thm:3.5}

\begin{proof}
Note we can approximate $\phi_\ell(x)\in L^2(D)$ with simple functions $\tilde\phi_\ell(x)=\sum_{i=1}^d k_i \chi_{D_i}(x)$ where $D_i$'s are measurable subsets of $D$ and $\chi_{D_i}(x)=1$ if $x\in D_i$ and $0$ otherwise.
By the linear combination property of elliptic distributions \citep[c.f. Theorem 2.6.3 in][]{fang1990generalized}, $\tilde u_\ell=\int_D u(x) \tilde\phi_\ell(x) dx \sim \qED(0, c)$ with $c=\alpha_d^{-1}\E[\tilde u_\ell^2]$ to be determined.
Note $\alpha_d=\frac{2^{\frac{2}{q}}\Gamma(\frac{d}{2}+\frac{2}{q})}{d\Gamma(\frac{d}{2})} d^{1-\frac{2}{q}}$ comes from Proposition 3.2 
and the scaling $\bu^*=d^{\half-\frac{1}{q}}\bu$ in Definition 3.2. 
We have $\alpha_d=\frac{\Gamma(\frac{d}{2}+\frac{2}{q})}{\Gamma(\frac{d}{2})}\left(\frac{2}{d}\right)^\frac{2}{q}\to 1$ as $d\to\infty$.
Taking the limit $d\to\infty$, we have $u_\ell=\int_D u(x) \phi_\ell(x) dx \sim \qED(0, c)$.
In general, by the similar argument we have
\begin{align*}
\mCv(u_\ell, u_{\ell'}) &= \E[u_\ell u_{\ell'}] 
= \int_D\int_D \E[u(x)u(x')]\phi_\ell(x)\phi_{\ell'}(x') dxdx' \\
&= \int_D\int_D \mC(x, x') \phi_\ell(x)\phi_{\ell'}(x') dxdx' 
= \int_D \lambda_\ell \phi_\ell(x')\phi_{\ell'}(x') dx'
= \lambda_\ell\delta_{\ell\ell'}
\end{align*}
Thus it completes the proof.
\end{proof}

\subsection{Proof of Theorem 3.5}\label{thm:3.6}

Before proving Theorem 3.5, we first prove the following lemma based on the conditional of elliptic distribution \citep{CAMBANIS_1981,fang1990generalized}.
\begin{lem}\label{lem:cond}
If $\bu = (\bu_1, \bu_2)\sim \qED_d(\bmu,\bC)$ with $\bmu=\begin{bmatrix}\bmu_1\\ \bmu_2\end{bmatrix}$ and $\bC=\begin{bmatrix} \bC_{11} & \bC_{12} \\ \bC_{21} & \bC_{22} \end{bmatrix}$, $\bu\in\mbR^d$, $\bu_i\in\mbR^{d_i}$ for $i=1,2$ and $d_1+d_2=d$, then we have the following conditional distribution
\begin{equation*}
\begin{aligned}
\bu_1 | \bu_2 &\sim \qED_{d_1}(\bmu_{1\cdot 2},\bC_{11\cdot 2}), \\
\bmu_{1\cdot 2} &= \bmu_1+\bC_{12}\bC_{22}^{-1}(\bu_2-\bmu_2), \quad
\bC_{11\cdot 2} = \bC_{11}-\bC_{12}\bC_{22}^{-1}\bC_{21}
\end{aligned}
\end{equation*}
\end{lem}
\begin{proof}
This directly follows from \citep[Corollary 5 of Theorem 5 in][]{CAMBANIS_1981} or \citep[Corollary 3 of Theorem 2.6.6 in][]{fang1990generalized} for $\qED_d(\bmu, \bC)=\EC_d(\bmu,\bC,g)$ with our chosen $g$. 
\end{proof}

Now we prove the Theorem 3.5.
\begin{proof}
By the linear combination property of the elliptic distributions \citep{Johnson_1987,fang1990generalized}, we have $\by \sim \qED(\bzero, \bC+\Sigma)$.
Then based on the consistency, we have the joint distribution
\begin{equation*}
\begin{bmatrix} \by\\ u(x_*)\end{bmatrix} \sim \qED\left( \bzero, \begin{bmatrix} \bC + \Sigma & \bC_* \\ \tp{\bC}_* & \bC_{**} \end{bmatrix}  \right)
\end{equation*}
Therefore, the conclusion follows from Lemma \ref{lem:cond}.
\end{proof}

\subsection{Proposition of Conditional Conjugacy for Variance Magnitude ($\sigma^2$)}\label{prop:3.3}

\begin{prop}
If we assume a proper inverse-gamma prior for the variance magnitude such that $\bu|\sigma^2\sim \qED_d(\bmu,\bC=\sigma^2 \bC_0)$, and $\sigma^q\sim \Gamma^{-1}(\alpha,\beta)$, then we have
\begin{equation}
\sigma^q|\bu \sim \Gamma^{-1}(\alpha',\beta'), \quad 
\alpha'=\alpha+\frac{d}{2}, \quad \beta'=\beta+\frac{\tp{(\bu-\bmu)} \bC_0^{-1} (\bu-\bmu)}{2}
\end{equation}
\end{prop}

\begin{proof}
Denote $r_0=\tp{(\bu-\bmu)} \bC_0^{-1} (\bu-\bmu)$.
We can compute the joint density of $\bu$ and $\sigma^2$
\begin{align*}
p(\bu, \sigma^2)=&p(\bu|\sigma^2) p(\sigma^q) \\
=& \frac{q}{2} (2\pi)^{-\frac{d}{2}} |\bC_0|^{-\half} r_0^{(\frac{q}{2}-1)\frac{d}{2}} \sigma^{-\frac{qd}{2}}\exp\left\{-\sigma^{-q}\frac{r_0^\frac{q}{2}}{2}\right\} 
\frac{\beta^\alpha}{\Gamma(\alpha)} (\sigma^q)^{-(\alpha+1)} \exp(-\beta \sigma^{-q}) \\
\propto & (\sigma^q)^{-(\alpha+\frac{d}{2}+1)} \exp\left\{-\sigma^{-q}\left(\beta+\frac{r_0^q}{2}\right)\right\}
\end{align*}
By identifying the parameters for $\sigma^q$ we recognize that $\sigma^q|\bu$ is another inverse-gamma with parameters $\alpha'$ and $\beta'$ as given.
\end{proof}

\section{ALGORITHM}


\begin{algorithm}[htpb]
\caption{White-noise Preconditioed Crank-Nicolson (wn-pCN) for Q-EP Prior Models}
\label{alg:wpCN}
\begin{algorithmic}[1]
\STATE Fix $\beta\in(0,1]$. Choose initial state $\bz^{(0)}\in\mbR^d$.
\FOR{$k=0,\cdots, K-1$}
\STATE Propose $\hat\bz^{(k)}=(1-\beta^2)^\half \bz^{(k)}+\beta \bz'$, $\bz'\sim \mN(\bzero,\bI)$.
\STATE Set $\bz^{(k+1)}=\hat\bz^{(k)}$ with acceptance probability
$$\min\left\{1, \frac{L(T(\hat\bz^{(k)}))|dT(\hat\bz^{(k)})|}{L(T(\bz^{(k)}))|dT(\bz^{(k)})|}\right\}$$
\STATE or else set $\bz^{(k+1)}=\bz^{(k)}$.
\ENDFOR
\end{algorithmic}
\end{algorithm}

\begin{figure}[htbp]
\centering
\includegraphics[width=1\textwidth,height=.52\textwidth]{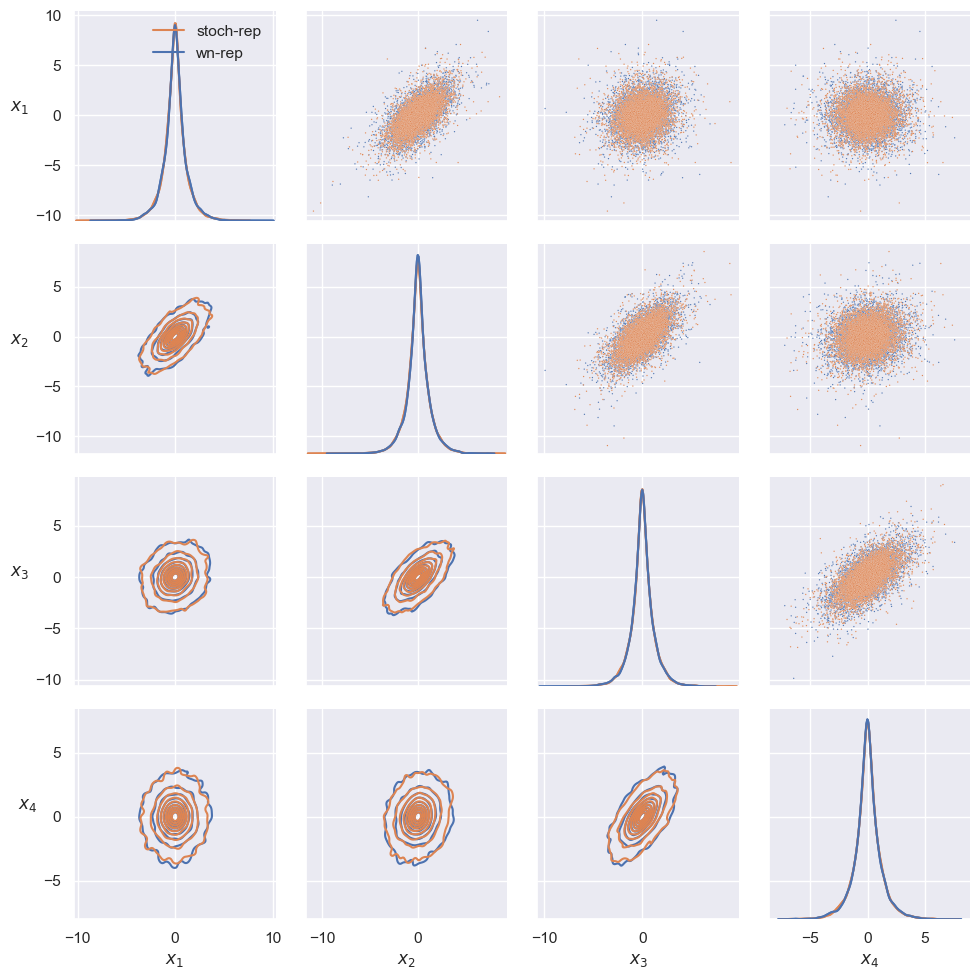}
\caption{Comparison in sampling $\qED_d$ using the stochastic representation \eqref{eq:stoch_EC} (organge) and the white-noise representation \eqref{eq:pushforward} (blue). Numerical results show their sampling distributions are indistinguishable. Empirical densities are estimated based on 10000 samples (shown as dots).}
\label{fig:wnrep}
\end{figure}

\section{ADDITIONAL EXPERIMENTAL RESULTS}
In this section, we present some additional numerical experimental results that cannot be included in the main text due to the page limit.

First, we numerically verify the equivalence between the stochastic representation \eqref{eq:stoch_EC} and the white-noise representation \eqref{eq:pushforward} of $\qED_d$ random variable in Figure \ref{fig:wnrep}.
More specifically, we generate 10000 samples using each of these two representations and illustrate in Figure \ref{fig:wnrep} that the two samples yield empirical marginal distributions (1d and 2d) close enough to each other.

\subsection{Time Series Modeling}
For modeling the simulated time series and stock prices, we include the optimization trace of negative (log)-posterior densities and relative errors for the two simulations and two stocks prices in Figure \ref{fig:ts_errs}.
As commented in the main text, these plots show that Q-EP model can converge faster to lower errors compared with GP and Besov models.

\begin{figure}[t]
\begin{subfigure}[b]{.5\textwidth}
\includegraphics[width=1\textwidth,height=.3\textwidth]{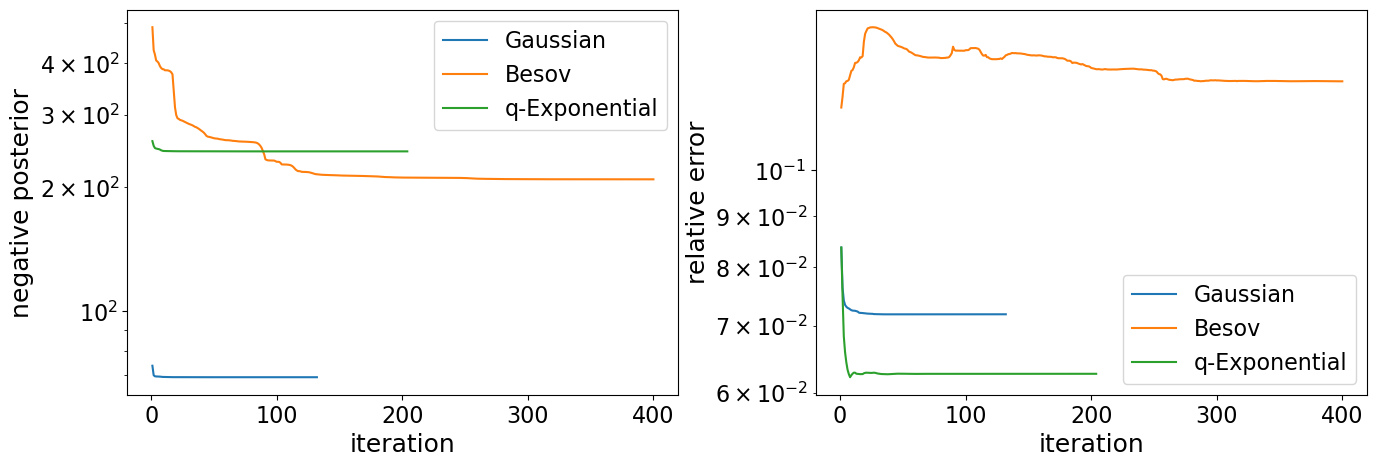}
\caption{Time series with step jumps.}
\label{fig:ts_step_errs}
\end{subfigure}
\begin{subfigure}[b]{.5\textwidth}
\includegraphics[width=1\textwidth,height=.3\textwidth]{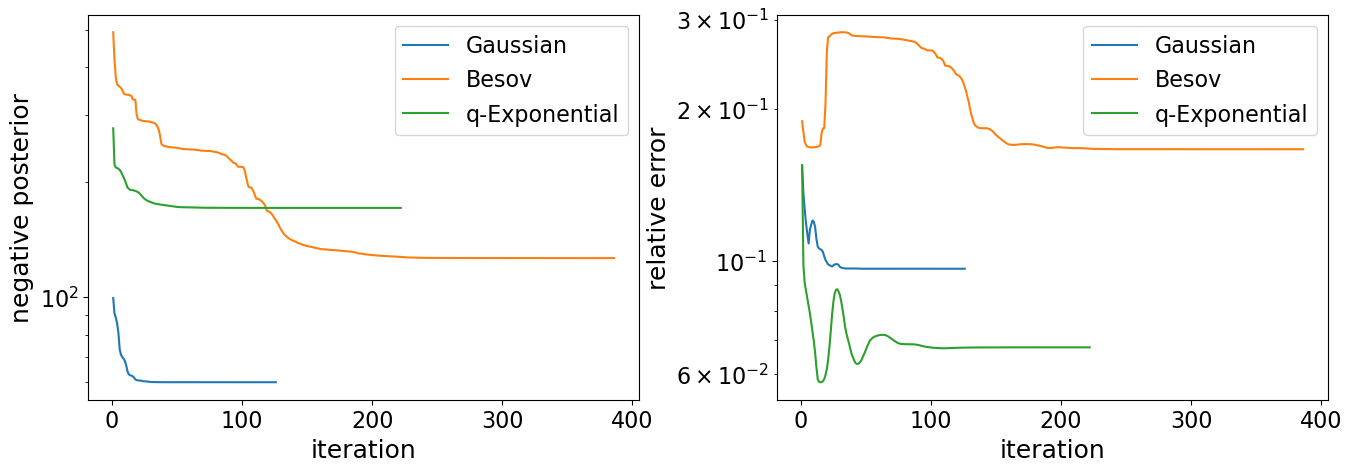}
\caption{Time series with sharp turnings.}
\label{fig:ts_turning_errs}
\end{subfigure}
\begin{subfigure}[b]{.5\textwidth}
\includegraphics[width=1\textwidth,height=.3\textwidth]{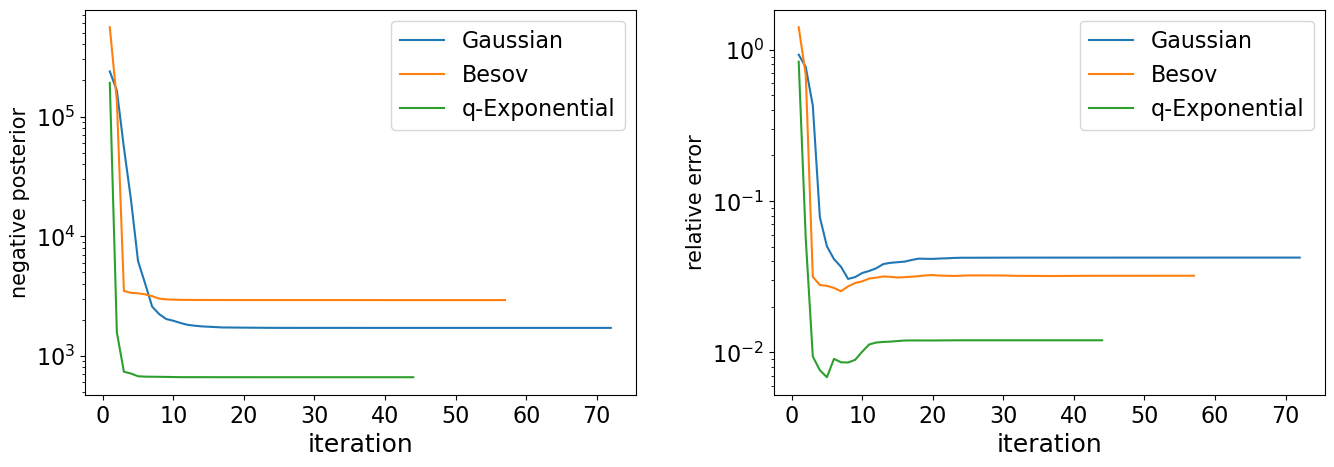}
\caption{Tesla stock prices in 2022.}
\label{fig:tesla_errs}
\end{subfigure}
\begin{subfigure}[b]{.5\textwidth}
\includegraphics[width=1\textwidth,height=.3\textwidth]{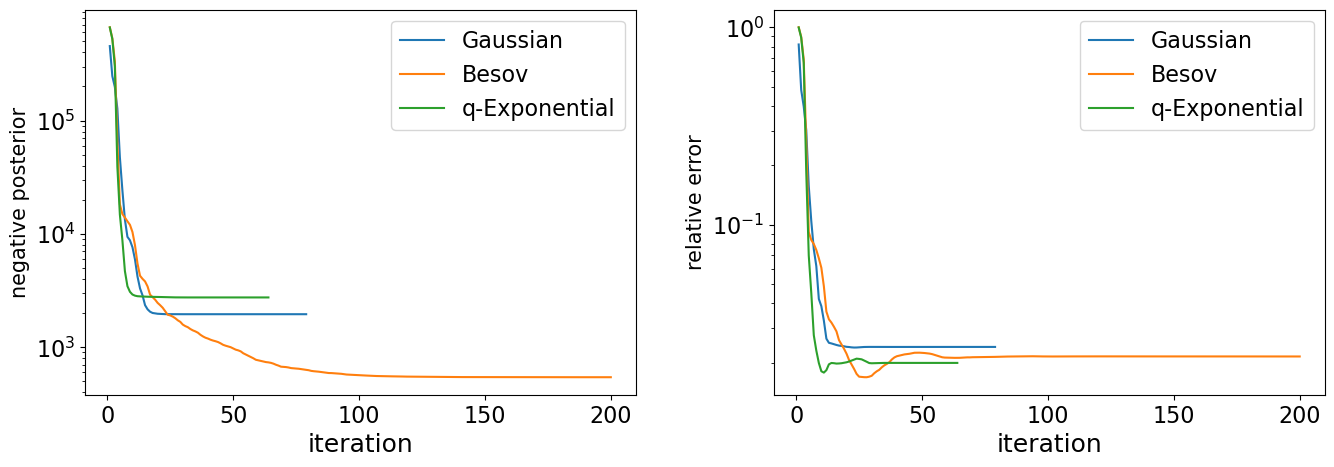}
\caption{Google stock prices in 2022.}
\label{fig:google_errs}
\end{subfigure}
\caption{Negative posterior densities (left) and errors (right) as functions of iterations
in the BFGS algorithm used to obtain MAP estimates. Early termination is implemented if the error falls below some threshold or the maximal iteration (1000) is reached.
Relative errors are compared against truth in the simulation and the actual data in the Tesla stock.}
\label{fig:ts_errs}
\end{figure}

\begin{figure}[tbp]
\begin{subfigure}[b]{.6\textwidth}
\includegraphics[width=1\textwidth,height=.265\textwidth]{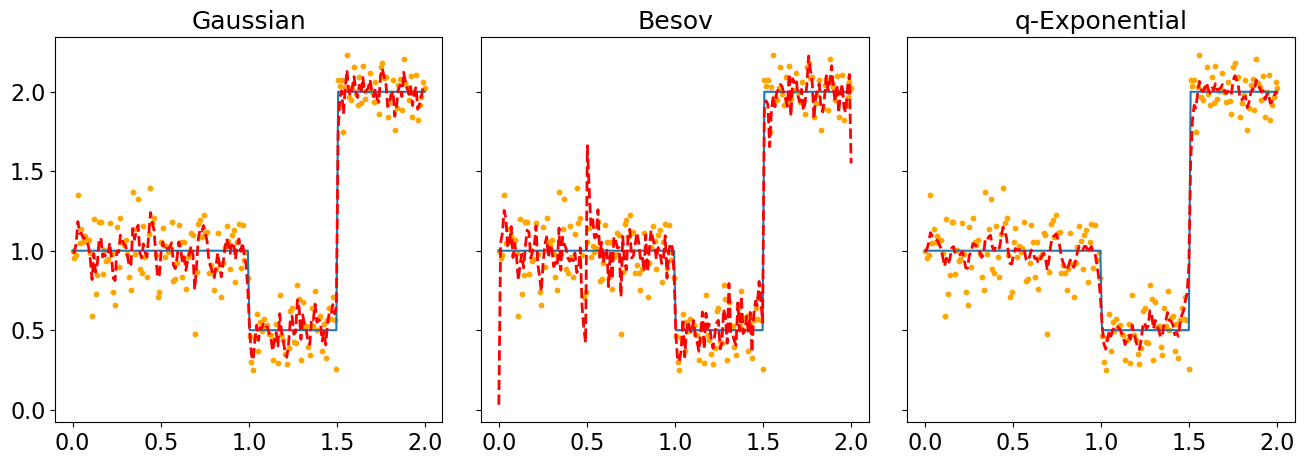}
\caption{Time series with step jumps (model fitting).}
\label{fig:ts_jump_maps}
\end{subfigure}
\begin{subfigure}[b]{.4\textwidth}
\includegraphics[width=1\textwidth,height=.4\textwidth]{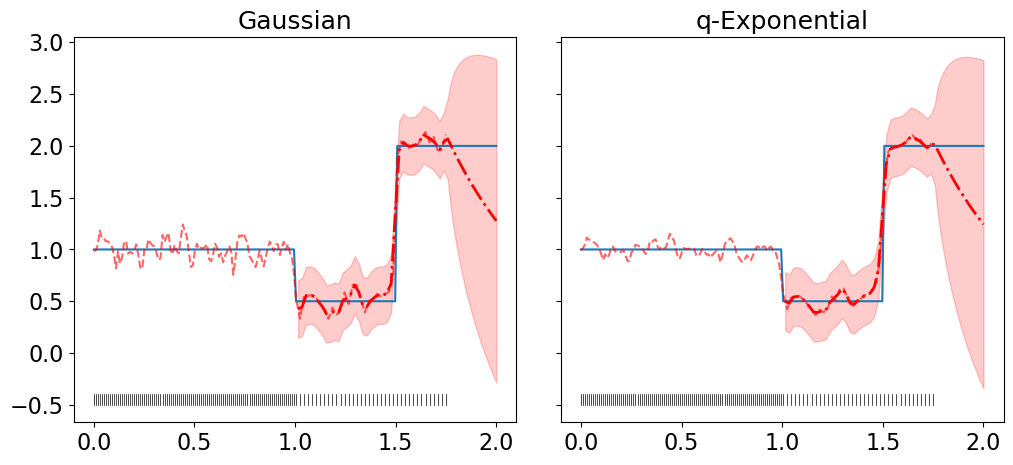}
\caption{Time series with jumps (prediction).}
\label{fig:ts_jump_preds}
\end{subfigure}
\begin{subfigure}[b]{.6\textwidth}
\includegraphics[width=1\textwidth,height=.3\textwidth]{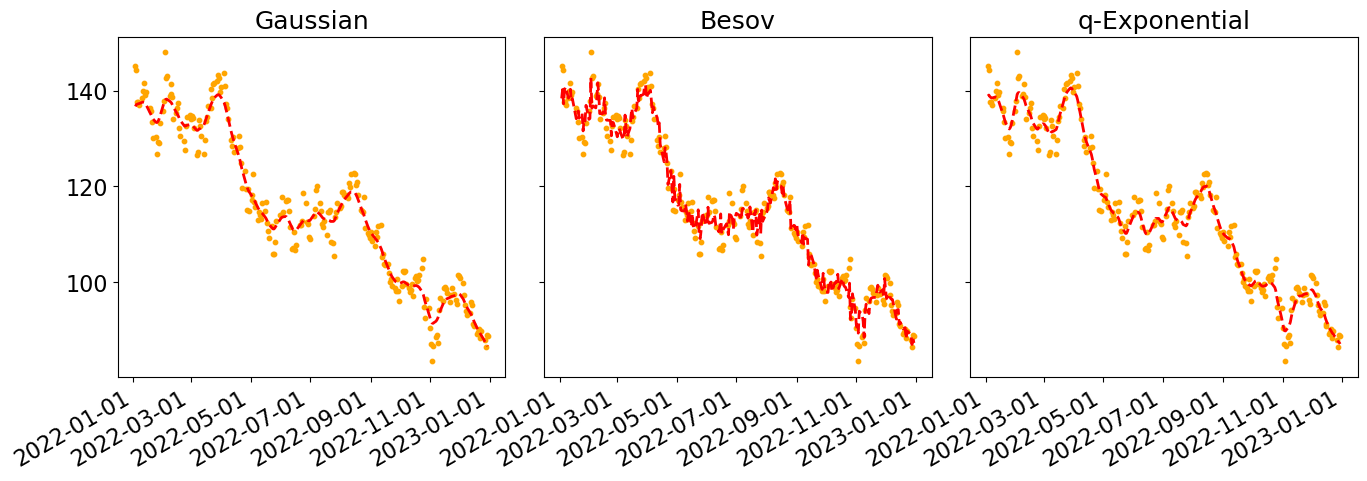}
\caption{Google stock prices in 2022 (model fitting).}
\label{fig:google_maps}
\end{subfigure}
\begin{subfigure}[b]{.4\textwidth}
\includegraphics[width=1\textwidth,height=.45\textwidth]{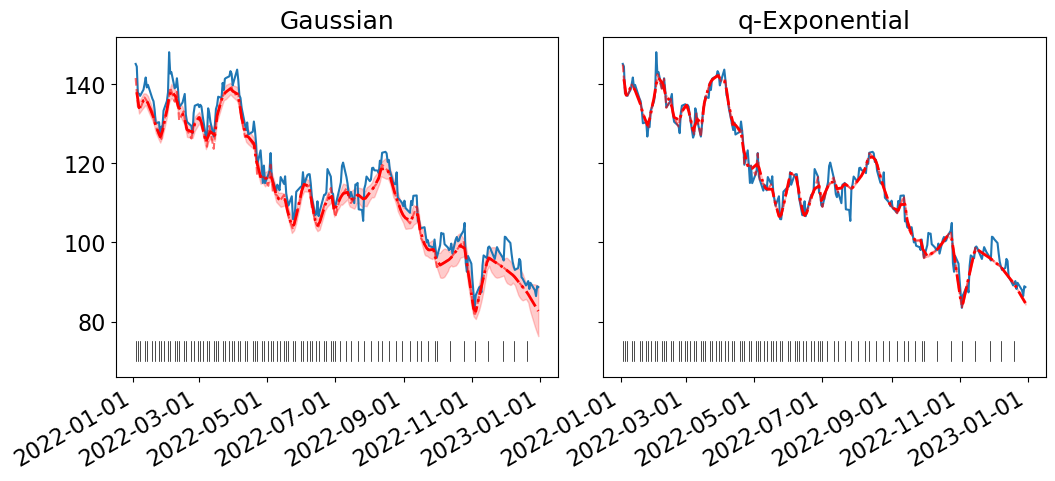}
\caption{Google stock prices in 2022 (prediction).}
\label{fig:google_preds}
\end{subfigure}
\caption{(a)(c) MAP estimates by GP (left), Besov (middle) and Q-EP (right) models.
(b)(d) Predictions by GP (left) and Q-EP (right) models.
Orange dots are actual realizations (data points).
Blue solid lines are true trajectories. Black ticks indicate the training data points. Red dashed lines are MAP estimates. Red dot-dashed lines are predictions with shaded region being credible bands.}
\label{fig:ts1_maps}
\end{figure}


Next, we compare MAP estimates by GP, Besov and Q-EP models in Figure \ref{fig:ts_jump_maps} for simulated time series with step jumps and in Figure \ref{fig:google_maps} for the Google stock prices in 2022. We also investigate the prediction results by GP and Q-EP in these two examples in Figures \ref{fig:ts_jump_preds} and \ref{fig:google_preds}.
Table \ref{tab:ts_postrmse} summarizes the RMSE of estimated stock prices by the three models and its standard deviation for repeating the experiments 10 times independently.

\begin{table}[t]
\caption{Posterior estimates of Tesla and Google stock prices by GP, Besov and Q-EP prior models: $\textrm{RMSE}:=\Vert \bar{u}-u\Vert_2$. Results are repeated 10 times with different random seeds.}
\centering
\begin{tabular}{l|lll|lll}
\toprule
& \multicolumn{3}{|c|}{Tesla} & \multicolumn{3}{c}{Google} \\
\cmidrule{2-4} \cmidrule{5-7}
 & GP & Besov & Q-EP & GP & Besov & Q-EP \\
 \midrule
RMSE &  171.8515 & 90.3086 & {\bf 83.8130} & 20.4095 & 25.2012 & {\bf 18.3597}\\
std(RMSE) & 1.8018 & 1.1478 & 2.6949 & 0.7115 & 0.1698 & 0.9617\\
\bottomrule
\end{tabular}
\label{tab:ts_postrmse}
\end{table}

\begin{figure}[htbp]
    \centering
    \includegraphics[width=1\textwidth,height=.22\textwidth]{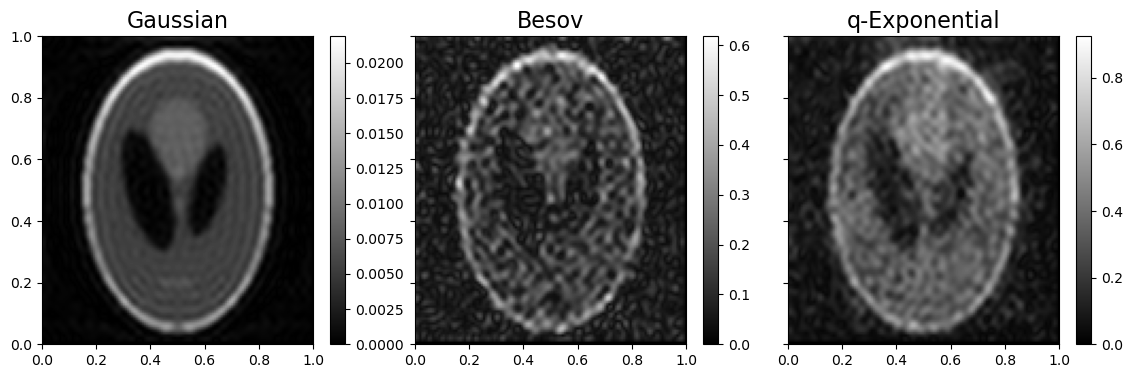}
    \caption{Shepp–Logan phantom: uncertainty field (posterior standard deviation) given by GP, Besov and Q-EP models.
    GP tends to underestimate the uncertainty values (about $1\%$ of that with Q-EP).}
    \label{fig:ct_std}
\end{figure}
\begin{figure}[htbp]
    \centering
    \includegraphics[width=1\textwidth,height=.22\textwidth]{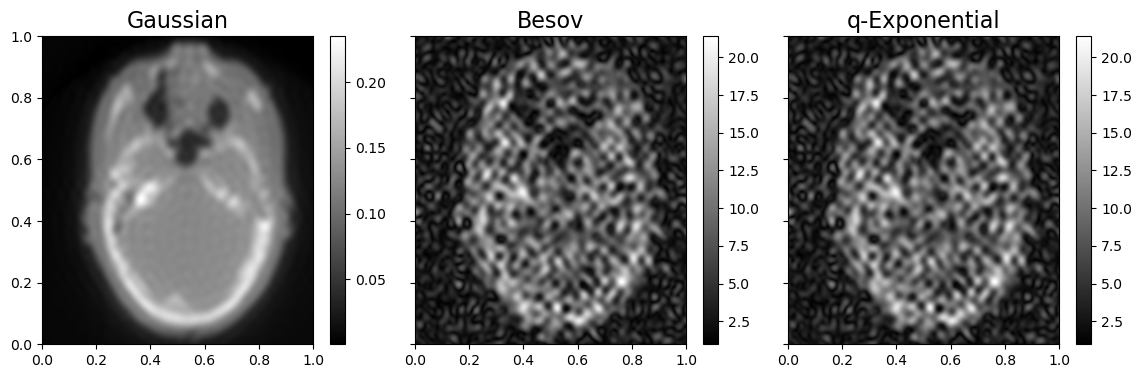}
    \includegraphics[width=1\textwidth,height=.22\textwidth]{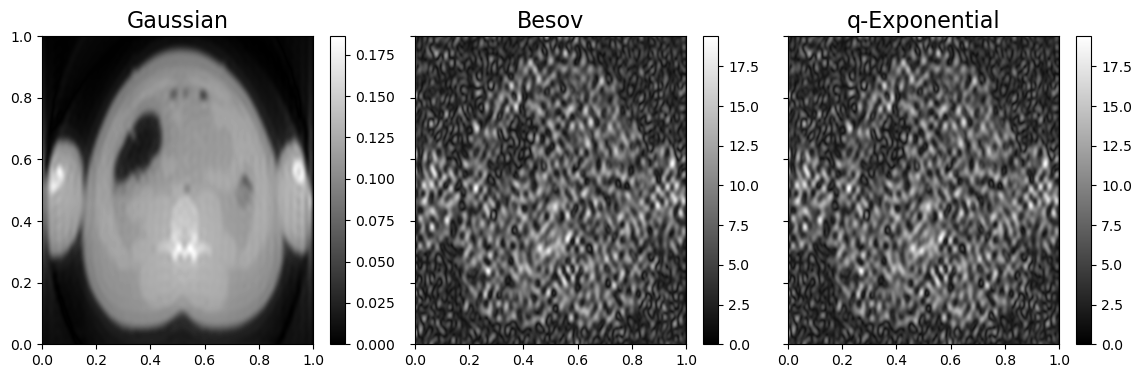}
    \caption{CT of human head (upper) and torso (lower): uncertainty field (posterior standard deviation) given by GP, Besov and Q-EP models.
    Note GP tends to underestimate the uncertainty values (about $1\%$ of that with Q-EP).}
    \label{fig:ct_human_std}
\end{figure}
\subsection{Computed Tomography Imaging}


In the problem of reconstructing human head and torso CT images, Table \ref{tab:CT_human} compares GP, Besov and Q-EP models in terms of relative error (RLE), log-likelihood (LL), and imaging quality metrics including PSNR, SSIM and HarrPSI. In most cases, Q-EP outperforms, or achieves comparable scores with the other two methods.

\begin{table*}[htbp]
\caption{MAP estimates for CT of human head and torso by GP, Besov and Q-EP prior models:
relative error, $\textrm{RLE}:=\Vert \hat u-u^\dagger\Vert/\Vert u^\dagger\Vert$ of MAP ($\hat u =u^*$), log-likelihood (LL), PSNR, SSIM and HarrPSI.}
\centering
\begin{tabular}{l|lll|lll}
\toprule
  \multicolumn{4}{c|}{Head} & \multicolumn{3}{|c}{Torso} \\
\cmidrule{1-4} \cmidrule{5-7}
 &          GP & Besov & Q-EP   &    GP & Besov & Q-EP \\
\midrule
RLE & 0.2999 &  0.2241 & {\bf 0.2224}  &  0.2611 & 0.2177 & {\bf 0.2153} \\
LL & -4.05e+5 &  -1.12e+4 & -1.17e+4   &  -3.30e+5 & -3.86e+3 & -4.37e+3 \\
PSNR & 24.2321 & 26.7633 & {\bf 26.8281} & 23.6450 & 25.2231 & {\bf 25.3190} \\
SSIM & 0.7010 & 0.7914 & {\bf 0.8096} & 0.5852 & {\bf 0.6983} & 0.6982 \\
HaarPSI & 0.0525 & {\bf 0.0593} & 0.0587 & 0.0666 & {\bf 0.0732} & 0.07190 \\
\bottomrule
\end{tabular}
\label{tab:CT_human}
\end{table*}


Lastly, Figures \ref{fig:ct_std} and \ref{fig:ct_human_std} show that the posterior standard deviations estimated by wn-pCN using GP model could be misleading because the seemingly more recognizable shape deludes the fact that they are about two orders of magnitude smaller in value compared with the other two models. This implies that GP might underestimate the uncertainty present in the observed sinograms in the CT imaging analysis.

\subsection{Noisy/Blurry Image Reconstruction}

Next we consider reconstructing a ($128\times 128$ pixels) image of satellite shown on the leftmost of Figure \ref{fig:linv_map} from a blurred observation next to it.
The image itself can be viewed as a function $u(x)$ on the square unit $D=[0,1]^2$ taking values as the pixels. When evaluating $u(x)$ on the discretized domain, $u(x)$ becomes a matrix of size $128\times 128$, which can further be vectorized to $\bu\in\mbR^d$ with $d=128^2$.
The true image, denoted as $u^\dagger$, is blurred by applying a motion blur point spread function \citep[PSF][]{Buccini_2020} and adding $5\%$ Gaussian noise. 
The actual observation, $y(x)$, can be written as in the following linear model:
\begin{equation*}
y(x) = A u(x) + \eps, \quad \eps \sim \mN(0, \sigma^2_\eps I_d)
\end{equation*}
where $A\in \mbR^{J\times d}$ is the blur motion PSF with $J=d$ and $\sigma_\eps/\Vert Au\Vert=5\%$.
Note, the blurring effect in the observed image (the second from left of Figure \ref{fig:linv_map}) is mainly due to the PSF operator $A$, not the small Gaussian noise.

\begin{figure}[htbp]
    \centering
    \includegraphics[width=1\textwidth,height=.3\textwidth]{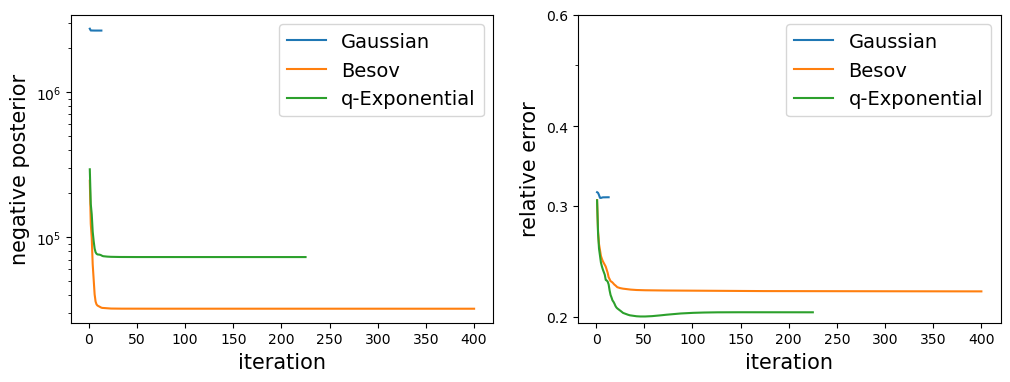}
    \caption{Image of satellite: negative posterior densities (left) and errors (right) as functions of iterations in the BFGS algorithm used to obtain MAP estimates. Early termination is implemented if the error falls below some threshold or the maximal iteration (1000) is reached.}
    \label{fig:linv_err}
    \vspace{-10pt}
\end{figure}



We compare the reconstructions by MAP estimate in Figure \ref{fig:linv_map}. 
The output by GP is blurry and close to the observed image, which means that GP does not ``de-noise" much. The result by Besov is much better than GP due to the $L_1$ regularization but it is still not sharp enough.
We can see that the Q-EP prior model produces the reconstruction of the highest quality. 
Figure \ref{fig:linv_varyingq} 
demonstrates the effect of $q>0$: the smaller $q$, the more regularization and hence sharper reconstruction. 
We also compare their negative posterior densities and relative errors, $\Vert \hat{u}-u^\dagger\Vert/\Vert u^\dagger\Vert$, in Figure \ref{fig:linv_err}. 
The Q-EP prior model yields the smallest error among all the three models.

\subsection{Advection-Diffusion Inverse Problem}\label{sec:adif}
\begin{figure}[t]
\centering
\begin{subfigure}[b]{.49\textwidth}
\includegraphics[width=1\textwidth,height=.7\textwidth]{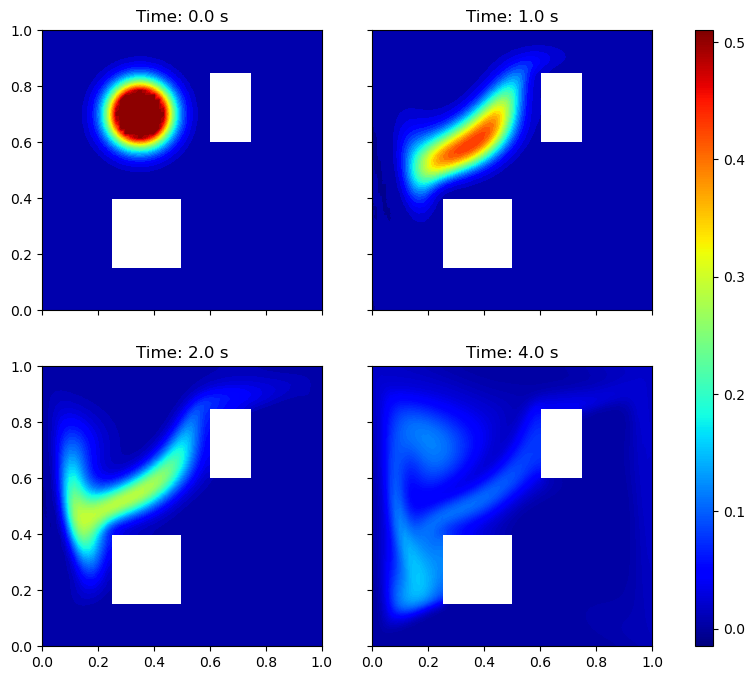}
\caption{True initial condition (top left), and the solutions $u(\bx, t)$ at different time points $t$.}
\label{fig:time_soln}
\end{subfigure}
\hspace{2pt}
\begin{subfigure}[b]{.49\textwidth}
\includegraphics[width=1\textwidth,height=.7\textwidth]{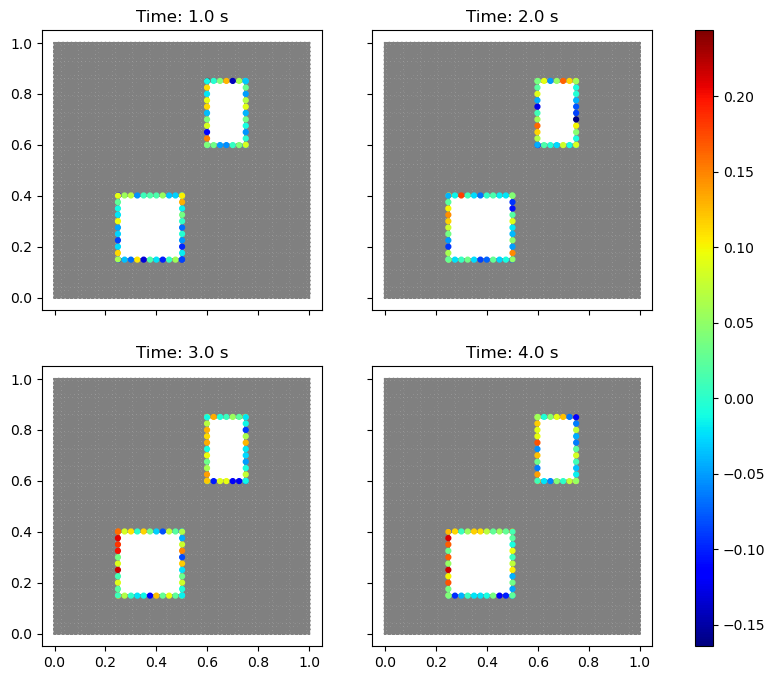}
\caption{Spatiotemporal observations at $80$ selected locations (color dots) across different time points.}
\label{fig:spatiotemporal_obs}
\end{subfigure}
\end{figure}

\begin{figure*}[t]
    \centering
    \includegraphics[width=1\textwidth,height=.25\textwidth]{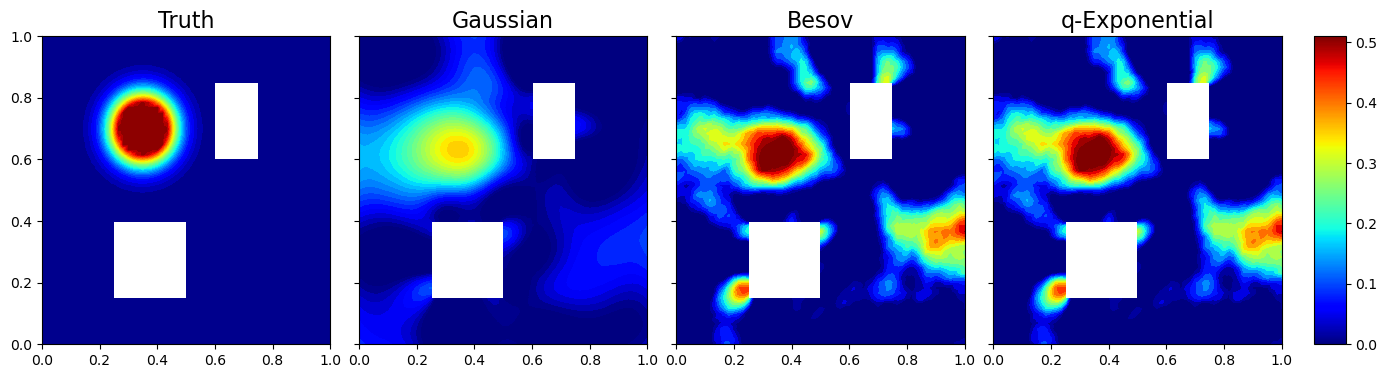}
    \caption{Advection-diffusion inverse problem: true initial condition $u_0^\dagger$ and posterior mean estimates by GP, Besov and Q-EP models.}
    \label{fig:adif_mean}
\end{figure*}
Finally, we consider a Bayesian inverse problem governed by a time-dependent advection-diffusion equation \cite{Petra2011,Lan2021} that can be applied to heat transfer, pollution tracing, etc. The inverse problem involves inferring an unknown initial condition $u_0\in L^2(\Omega)$ from spatiotemporal point measurements $\{y(\bx_i, t_j)\}$ as
\begin{equation*}\label{eq:stip}
    y(\bx, t) = \mG(u_0) + \eta(\bx, t), \quad \eta(\bx, t) \sim \mN(0, \Sigma)
\end{equation*}
The forward mapping $\mG : u_0 \to \mO u$ maps the initial condition $u_0$ to pointwise spatiotemporal observations of the concentration field $u(\bx, t)$ through the solution of the following advection-diffusion equation \citep{Petra2011,Villa_2021}:


\begin{equation*}
\begin{aligned}
u_t - \kappa \Delta u + \bv \cdot \nabla u &= 0 \quad in\; \Omega\times (0, T) 
&
-\frac{1}{\mathrm{Re}} \Delta \bv + \nabla p + \bv \cdot \nabla \bv &= 0 \quad in\; \Omega\\
u(\cdot, 0) &= u_0 \quad in\; \Omega 
&
\nabla \cdot \bv &= 0 \quad in\; \Omega \\
\kappa \nabla u\cdot \vec n &= 0, \quad on\; \pa\Omega\times (0, T)
&
\bv &={\bf g}, \quad on\; \pa\Omega
\end{aligned}
\end{equation*}

where $\Omega \subset [0,1]^2$ is a bounded domain shown in Figure \ref{fig:time_soln}, $\kappa=10^{-3}$ is the diffusion coefficient, and $T>0$ is the final time.
The velocity field $\bv$ is computed by solving the following steady-state Navier-Stokes equation with the side walls driving the flow \citep{Petra2011}.
Here, $p$ is the pressure, and $\mathrm{Re}$ is the Reynolds number, which is set to 100 in this example. The Dirichlet boundary data ${\bf g}\in \mbR^2$ is given by ${\bf g}={\bf e}_2=(0,1)$ on the left wall, ${\bf g}=-{\bf e}_2$ on the right wall, and ${\bf g}={\bf 0}$ everywhere else.

To generate data, 
we set the true value of parameter $u_0$ in \eqref{eq:stip} as $u_0^\dagger = 0.5\wedge \exp\{-100[(x-0.35)^2+(y-0.7)^2]\}$, illustrated in the top left panel of Figure \ref{fig:time_soln}, which also shows a few snapshots of the solutions $u(\bx, t)$ at other time points on a regular grid mesh of size $61\times 61$.
Spatiotemporal observations $\{y(\bx_i, t_j)\}_{i=1,j=1}^{I,J}$ are collected at $I=80$ selected locations $\{\bx_i\}_{i=1}^I$ around the boundary of two inner boxes (See Figure \ref{fig:time_soln} and also Figure \ref{fig:spatiotemporal_obs} 
) across $J=16$ time points $\{t_j\}_{j=1}^J$ evenly distributed between $1$ and $4$ seconds (thus denoted as $\mO u$) with noise variance $\Sigma=\sigma_\eta^2 I_{1280}$ where $\sigma_\eta = 0.5\max \mO u$, i.e.
$y(\bx_i, t_j)=\mG(u_0^\dagger)+\eta_{ij}=u(\bx_i, t_j)+\eta_{ij}$.

\begin{figure}[t]
    \centering
    \includegraphics[width=1\textwidth,height=.3\textwidth]{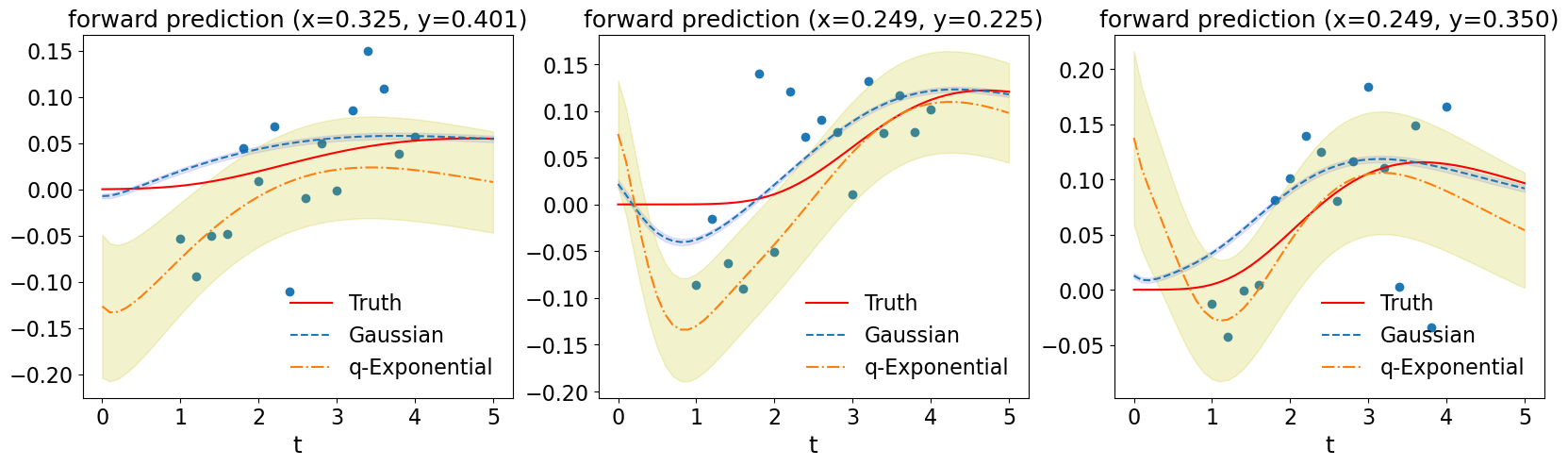}
    \caption{Advection-diffusion inverse problem: comparing forward predictions, $\bar{\mG}(\bx, t_*)$, based on the GP (blue dashed lines) and Q-EP (orange dot-dashed lines) prior models at three selective locations $\bx=(0.325, 0.401)$, $\bx=(0.249, 0.225)$ and $\bx=(0.249, 0.350)$. Blues dots are observations.}
    \label{fig:adif_pred}
\end{figure}

To solve the inverse problem of finding the initial condition $u_0$ in the Bayesian framework \cite{stuart10,dashti2017}, we impose $u_0$ with GP, Besov and Q-EP priors respectively and seek the posterior $p(u_0|y)$.
For GP and Q-EP, we adopt a covariance kernel, $\mC=(\delta \mI -\gamma \Delta )^{-2}$, defined through the Laplace operator $\Delta$,
where $\delta$ governs the variance of the prior and $\gamma/\delta$ controls the correlation length \cite{dashti2017,Lan2021}. We set $\gamma=1$ and $\delta=8$ in this example.
For Besov, we adopt 2d Fourier basis of the format $\phi_{ij}=\cos(\pi i x_1)\cos(\pi j x_2)$ and truncate the series \eqref{eq:bsv_expn} for the first $L=1000$ terms.

We apply wn-pCN to this challenging nonlinear inverse problem with high dimensionality (3413) of spatially discretized $u$ 
at each time $t$. Figure \ref{fig:adif_mean} compares the posterior mean estimates of $u_0$ given by these three models. Because the truth (leftmost) has clear edge at its cutoff by $0.5$, Q-EP is more appropriate than GP and it indeed generates better estimate closer to the truth. 
Figure \ref{fig:adif_pred} plots the prediction of forward mapping at a few selective locations on the left side of lower inner box by $\bar{\mG}(\bx, t_*)=\frac{1}{S}\sum_{s=1}^S \mG(u^{(s)})(\bx,t_*)$ with $u^{(s)}\sim p(u_0|y)$. Compared with GP, Q-EP predicts the solution path closer to the truth $\mG(u_0^\dagger)(\bx,t_*)$ where the observations see more dynamical changes.
More importantly, Q-EP provides proper UQ with credible bands wide enough to include the true trajectories. On the other hand, the posterior estimates by GP come with much narrower error bands that miss the truth. Again, we observe GP prior model being overconfident about less accurate estimates.






\end{document}